\documentclass[format=acmsmall, review=false, screen=false, manuscript]{acmart}  	

\usepackage{amssymb}
\interfootnotelinepenalty=1000000







\usepackage{amssymb,amsmath,amsthm,amsfonts}
\usepackage{enumitem}
\usepackage{amssymb}
\usepackage{amsfonts}
\usepackage{color}
\usepackage{dsfont}
\usepackage{graphicx}
\usepackage[normalsize]{subfigure}
\usepackage{epstopdf}
\usepackage{array}
\usepackage{color}
\newcommand{\argmin}{\operatornamewithlimits{argmin}}

\newcolumntype{L}{>{\arraybackslash}m{11.75cm}} 

\newcommand{\E}[1]{\mathbb{E}\left[#1\right]} 

\newcommand{\Var}[1]{\operatorname{Var}(#1)}  

\newcommand{\gstar}{$\mbox{GREEDY}^*$}
\newcommand{\gstars}{\mbox{GREEDY}^*}

\newcommand{\gs}{G^*}
\theoremstyle{definition}
\newtheorem{remark}{Remark}
\newtheorem{observation}{Observation}

\allowdisplaybreaks

\begin{document}
\title[Towards Optimality in Parallel Job Scheduling]{Towards Optimality in Parallel Job Scheduling}

\author{ Benjamin Berg}
\affiliation{Carnegie Mellon University}
\email{bsberg@cs.cmu.edu}

\author{Jan-Pieter Dorsman}
\affiliation{University of Amsterdam}
\email{j.l.dorsman@uva.nl}

\author{Mor Harchol-Balter}
\affiliation{Carnegie Mellon University}
\email{harchol@cs.cmu.edu}

\begin{abstract}
To keep pace with Moore's law, chip designers have focused on
increasing the number of cores per chip rather than single core
performance.  In turn, modern jobs are often designed to run on any
number of cores.  However, to effectively leverage these multi-core chips, one must address the question of how many cores to assign to each job.  Given that jobs receive sublinear speedups from
additional cores, there is an obvious tradeoff: allocating more cores
to an individual job reduces the job's runtime, but in turn decreases the efficiency of the overall system.  We ask how the system should schedule jobs across cores so as to minimize the mean response
time over a stream of incoming jobs.

To answer this question, we develop an analytical model of jobs
running on a multi-core machine.  We prove that EQUI, a
policy which continuously divides cores evenly across jobs, is optimal
when all jobs follow a single speedup curve and have exponentially distributed sizes.  EQUI requires jobs to change their level of parallelization while they run.  Since this is not possible for all workloads, we consider a class of ``fixed-width'' policies, which choose a single level of parallelization, $k$, to use for all jobs.  We prove that, surprisingly, it is possible to achieve EQUI's performance without requiring jobs to change their levels of parallelization by using the \emph{optimal} fixed level of parallelization, $k^*$.  We also show how to analytically derive the optimal $k^*$ as a function of the system load, the speedup curve, and the job size distribution.  

In the case where jobs may follow different speedup curves, finding a good scheduling policy is even more challenging.  In particular, we find that policies like EQUI which performed well in the case of a single speedup function now perform poorly.  We propose a very simple policy, $\mbox{GREEDY}^*$, which performs near-optimally when compared to the numerically-derived optimal policy.
\end{abstract}
\maketitle

\renewcommand{\shortauthors}{B. Berg et al.}

\vspace{-.1in}
\section{Introduction}
\subsection*{The Parallelization Tradeoff}

Modern multicore chips afford the opportunity
to reduce job response time by parallelizing a job across several
cores.  To exploit this opportunity, modern parallel jobs are often designed to have the ability to run on any number of cores \cite{srinivasan2003robust, Feitelson:1997:TPP:646378.689517}.  However, effectively exploiting parallelism is non-trivial.
Specifically, for each arriving job one must decide its \emph{level of
parallelization} -- the number of cores on which the job is run.  We consider the setting where jobs arrive over time and ask how to minimize the mean response time across jobs.

In choosing the optimal levels of parallelization, we must consider
the following {\em tradeoff}.  Parallelizing an individual job across
multiple cores reduces the response time of that individual job.  In
practice, however, a job's speedup is sublinear and
concave in its level of parallelization, leading to an inefficient use
of resources and additional system load.  Hence, while a higher level
of parallelization may decrease an individual job's response time, it
may have a deleterious effect on overall mean response time across jobs.  Thus, while traditionally the programmer chooses the level of parallelization for their jobs, we propose making this a system level decision, minimizing mean response time across \emph{all} jobs.  

Access to additional cores can reduce a job's service requirement in a variety of ways.  If a job is composed of many independent computations, as is the case for jobs composed largely of matrix operations, having additional cores can significantly reduce response time.  However, even memory-bound jobs can benefit from more cores since additional cores often provide the job with additional memory throughput.  The effect of parallelization can, however, be limited by several factors:  jobs may require mostly sequential computation, and jobs may incur overhead due to checkpointing and communication between threads.  As the number of cores in the system increases, \cite{Bienia:2008:PBS:1454115.1454128}
 has shown that all the benefits and limitations of parallelization can be encapsulated in a job's \emph{speedup function}, $s(k)$, which specifies the ratio of a job's runtime on $k$ cores to its runtime on a single core.

In general, it is conceivable that every job will have a different speedup function.  However, there are also many workloads \cite{Bienia:2008:PBS:1454115.1454128} 
where all jobs have the same speedup function, for example when one is running multiple instances of the same program.  It turns out that even scheduling jobs with a single speedup function is non-trivial.  Hence, for the majority of the paper we will focus on a single speedup function before turning our attention to multiple speedup functions.

The question of how to choose the correct level of parallelization for jobs is commonly referred to as the choice between \emph{fine grained parallelism}, where every job is parallelized across a large number of cores, and \emph{coarse grained parallelism}, where the level of parallelization is small.  This same tradeoff arises in many systems beyond multicore chips.  For example, in a server farm, jobs could be run across multiple servers.  Running on more servers allows an individual job to complete more quickly, but leads to an inefficient use of resources which could increase the response times of subsequent jobs.  Additionally, an operating system might choose how to partition memory (cache space) between multiple applications.  Likewise, a system designer may have to choose between fewer wider bus lanes for memory access, or several narrower bus lanes.  In all cases, one must balance the effect on an individual job's response time with the effect on overall mean response time.  Throughout this paper, we will use the terminology of parallelizing \emph{jobs} across \emph{cores}, however all of our remarks can be applied equally to any setting where limited resources must be shared amongst concurrently running processes. 
\vspace{-.1in}
\subsection*{Prior Work}
Prior work on exploiting parallelism has traditionally been split between several disparate communities.  The SIGMETRICS/Performance community frequently considers scheduling a stream of incoming jobs for execution across several servers with the goal of minimizing mean response time, e.g., \cite{harchol2009surprising,Nelson1993PerfromEval,lu2011join,gupta2007analysis,tsitsiklis2011power}.  Job sizes are assumed to be random variables drawn from some general distribution, and arrivals are assumed to occur according to some stochastic arrival process.  Typically, it is assumed that each individual job is run on just \emph{one server}: jobs are \emph{not} parallelizable.  An exception is the work on Fork-Join queues.  Here it is assumed that every job is parallelized across \emph{all} servers \cite{ko2004response}; the question of finding the \emph{optimal} level of parallelization is not addressed.  This community has also considered systems where the service rate that a job receives can be adjusted over time, but this work has focused on balancing a tradeoff between response time and some other variable such as power consumption \cite{gandhi2009optimal} or \emph{Goodness of Service} \cite{chaitanya2008qdsl}.

By contrast, the SPAA/STOC/FOCS community extensively studies the effect of parallelism, however it largely focuses on a \emph{single job}.  More recently, this community has considered the effect of parallelization on the mean response time of a stream of jobs.  However, this has been through the lens of competitive analysis, which assumes the job sizes (service times), arrival times, and even speedup curves are adversarially chosen, e.g.\ \cite{Edmonds1999SchedulingIT, edmonds2009scalably,Agrawal:2016:SPJ:2935764.2935782}.  Competitive analysis also does not yield closed form expressions for mean response time.  We seek to analyze this problem using a more typical queueing theoretic model.  Our workloads are drawn from distributions and our analysis yields closed-form expressions for mean response time as well as expressions for the optimal level of parallelization.

The advent of moldable jobs which can run on any number of cores has pushed the high performance computing (HPC) systems community to consider how to effectively allocate cores to jobs \cite{Cirne:2002:UMI:634535.634539, ANASTASIADIS1997109, huang2013effective, srinivasan2003robust}.  These studies tend to be empirical rather than analytical, each looking at specific workloads and architectures, often leading to conflicting results between studies.  By contrast, our paper attacks this problem from a stochastic, analytical point of view, deriving optimal scheduling algorithms in a more general model.
\vspace{-.1in}
\subsection*{Our Model}

We assume that jobs arrive into an $n$-core machine according to a
Poisson Process with rate $\Lambda$ jobs/second where $$\Lambda :=
\lambda n$$ for some $\lambda$.  The random variable $X$ denotes the
\emph{size} of a job.  We think of $X$ as the inherent work associated
with a job.  Any job can be run on any subset of cores.  If a job of
size $X$ is parallelized across $k$ cores, its service requirement becomes $X_k$, where 
\begin{equation}\label{eq:service}
X_k=\frac{X}{s(k)}
\end{equation}
and $s(k) \leq k$ is called the \emph{speedup factor}.  The function
$s:\mathbb{R_+} \rightarrow \mathbb{R}_{+}$ is assumed to be \emph{non-decreasing} and \emph{concave}, in agreement with functions described in \cite{Hill:2008:ALM:1449375.1449387}.  Additionally, we will focus on instances where jobs receive an imperfect speedup and thus $s(k)$ is assumed to be \emph{sublinear}: $s(k) < k$ for all $k>1$ and  there exists some constant $c>0$ such that $s(k) < c$ for all $k\geq0$.  Importantly, we assume that a job is split perfectly so that its service requirement on each of the $k$ cores is the same.  While this assumption ignores the \emph{straggler problem}, in which some pieces of a job may take longer to complete than others, it is worth noting that extensive work has been done on mitigating these effects in practice \cite{ananthanarayanan2014grass,ren2015hopper}.

The optimality proofs in the paper will require assuming that $X$ is exponentially distributed.  However, all of the performance analysis presented admits general job size distributions.

An example of a well-known speedup function is Amdahl's
law~\cite{mccool2012structured}, which models every job as having a fraction of work, $p$, which is parallelizable.  The speedup factor $s(k)$ is then a function of the parameter $p$ as follows:
$$s(k) = \frac{1}{\frac{p}{k} +1 -p}.$$
Figure \ref{figAmdahl} shows Amdahl's law under various values of $p$.  Although Amdahl's law ignores aspects of job behavior, we see in \ref{figParsec} that several workloads from the PARSEC-3 benchmark \cite{zhan2017parsec3} follow speedup curves which can be accurately modeled by Amdahl's law.  Hence, although our analysis will not rely on the specifics of the speedup function, we will use Amdahl's law in numerical examples.
\begin{figure}[ht]
\vspace{-.25in}
\centering
\subfigure[]{
\includegraphics[width=.4\textwidth]{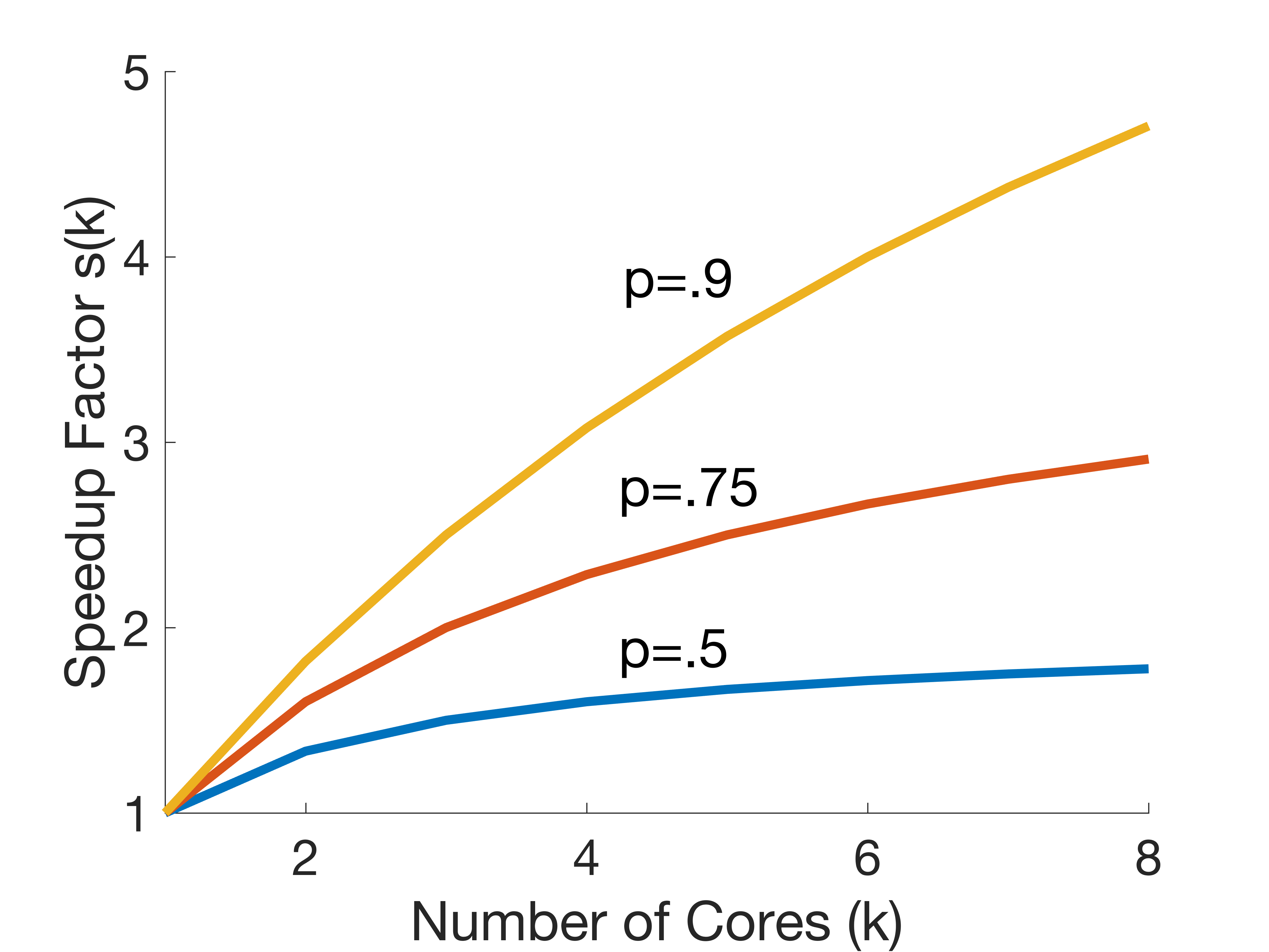}
\label{figAmdahl}
}
\subfigure[]{
\includegraphics[width=.4\textwidth]{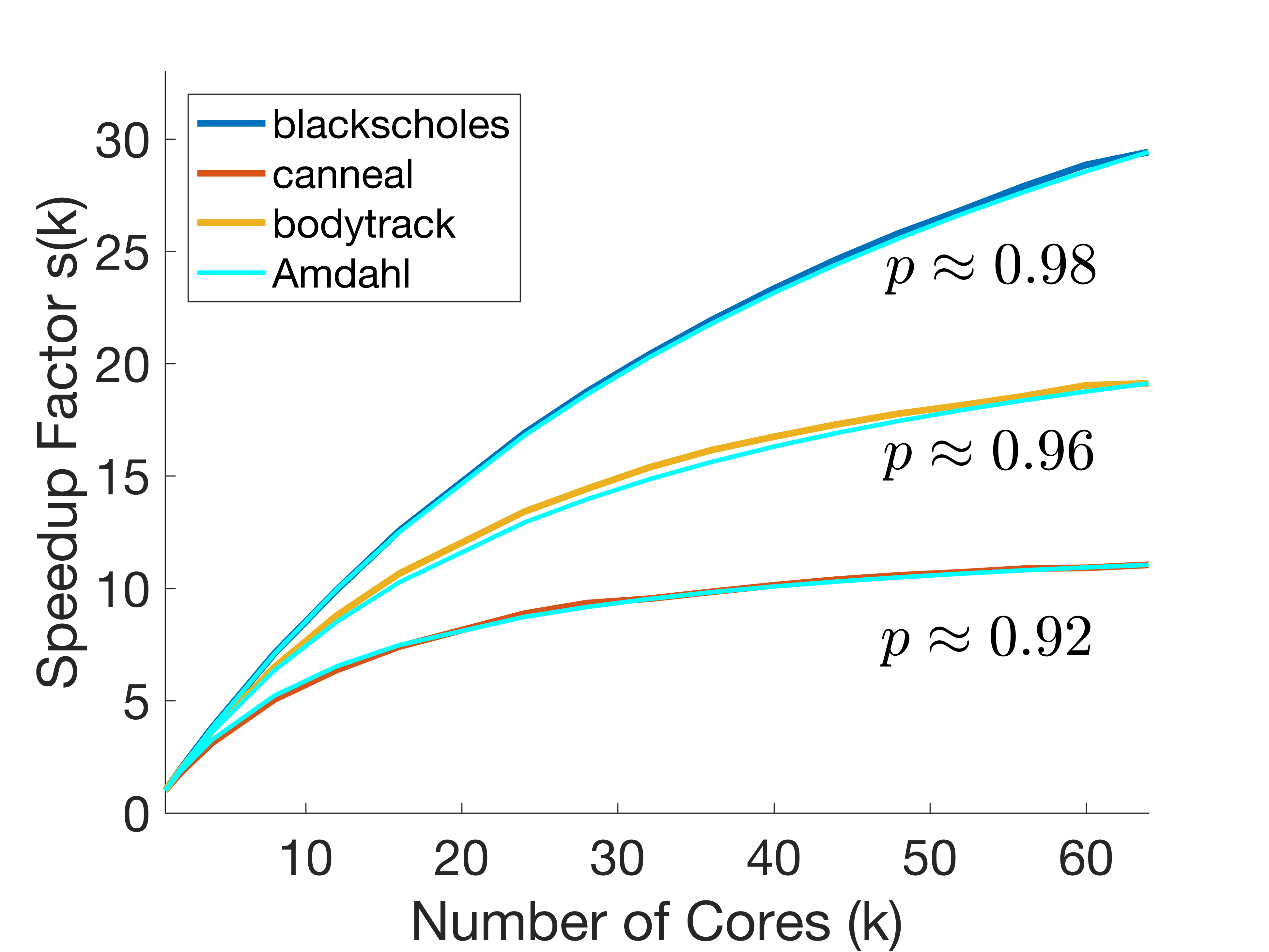}
\label{figParsec}
}
\caption{Various speedup curves under (a) Amdahl's law and (b) the PARSEC-3 Benchmark.  We see that the PARSEC-3 speedup curves are accurately approximated by Amdahl's law; these approximations are shown in (b) in light blue.}

\end{figure}

We mostly assume that jobs are
homogeneous with respect to $s$ -- all jobs receive the same speedup
due to parallelization -- although we will also consider
multiple speedup functions.

As jobs arrive, each job is initially dispatched to some subset of the
cores.  A \emph{scheduling policy} defines, at every moment in time, which
particular cores each job runs on.  Cores are assumed to be
homogeneous, and thus any job is capable of running on any core.

At any moment in time, there may be several jobs (job pieces)
running on a particular core.  We assume each core time-shares between its jobs (i.e. \emph{Processor Sharing}), which is typical for processors in multicore machines.  For example, suppose there are only 2 jobs in the system, each of size $x$, and each runs on the \emph{same} 5 cores.  Each job receives a speedup of $s(5)$ since it runs on 5 cores, but its runtime is $2\cdot\frac{x}{s(5)}$ because each job only gets half of each core.

\subsubsection*{The Problem} 

The fact that modern parallel jobs are capable of running on any number of cores begs the question of how many cores should be used for each job.  Let $T$ denote the \emph{response time} of a job (the time between when the job arrives and when all of its pieces have completed).  Our goal is to find and analyze scheduling policies that aim to minimize the mean response time,
$\mathbb{E}[T]$, across all jobs.  Clearly, $\mathbb{E}[T]$ depends on
the scheduling policy, the speedup function, $s$, the arrival rate of
jobs into the system, $\Lambda$, the mean job size, $\mathbb{E}[X]$,
and the number of cores, $n$.  We define the average system load,
$\rho$, to be 
$$\rho := \frac{\Lambda \mathbb{E}[X]}{n}.$$
 This is equivalently the fraction of time a core is busy when each job is run on a \emph{single} core.  Thus, $\rho < 1$ is a necessary condition for stability, regardless of the scheduling policy used.  We assume both $\rho$ and $s$ are constant over time.

\vspace{-.1in}
\subsection*{Contributions}

\subsubsection*{Contribution 1: Optimal Parallel Scheduling Under a Single Speedup Function}
\vspace{-.05in}

We provide a proof that EQUI is the optimal parallel scheduling policy (assuming exponentially distributed jobs sizes).  EQUI is a policy which first
appeared in \cite{Edmonds1999SchedulingIT}.  Under EQUI, at all times,
the $n$ cores are equally divided among the jobs in the system.
Specifically, whenever there are $\ell$ jobs in the system, then each
job is parallelized across $n/\ell$ cores.  EQUI is an idealized
policy in that (i) EQUI assumes that when a job runs on $n/\ell$ cores its runtime will be distributed as $\frac{X}{s(n/\ell)}$, even if $n/\ell$ is not an integer, and (ii) EQUI
requires jobs to be \emph{malleable} -- every job can change its level
of parallelization while it runs.

\subsubsection*{Contribution 2: Fixed-Width Parallel Scheduling}
\vspace{-.05in}

In practice, jobs are not malleable but are often \emph{moldable} -- a
job can run on any number of cores, $k$, but it must run on the same
$k$ cores for the duration of its lifetime.  In theory, a scheduling
policy could choose a different number of cores on which to run each
incoming job.  We define an extremely simple class of scheduling policies, called \emph{fixed-width} policies, whereby every job is run on the \emph{same fixed
number} of cores.

We take this a step further and impose the constraint that the $n$
cores of the machine are partitioned into static-sized ``chunks'', each
with $k$ cores, where $k$ divides $n$. For example, Figure
\ref{fig:chunks} shows some potential chunkings of a 16-core
machine.  Every arriving job is dispatched to a single chunk and runs
across all the cores in that chunk.

We introduce a policy called \emph{JSQ-Chunk} which uses a
fixed width, $k$, and assigns jobs to chunks according to the
Join-Shortest-Queue dispatching policy.  We provide an approximate
response time analysis for JSQ-Chunk, and, more significantly, we show
how to choose the \emph{optimal fixed-width}, $k^*$, for JSQ-Chunk as a
function of the system load, $\rho$.

\begin{figure}[ht!]
\centering
\subfigure[$k=8$]{
\includegraphics[width=.2\textwidth]{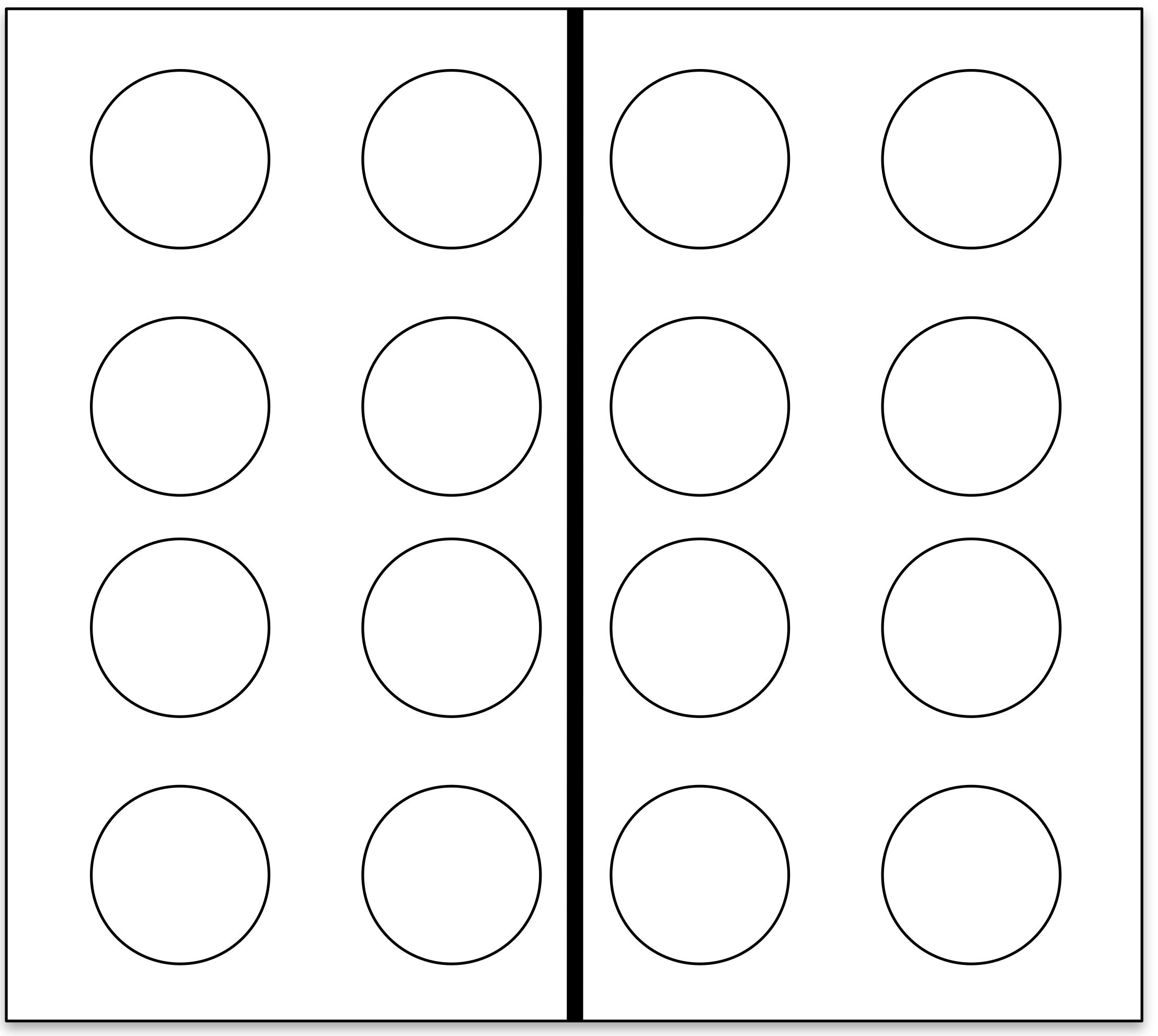}
}
\subfigure[$k=2$]{
\includegraphics[width=.2\textwidth]{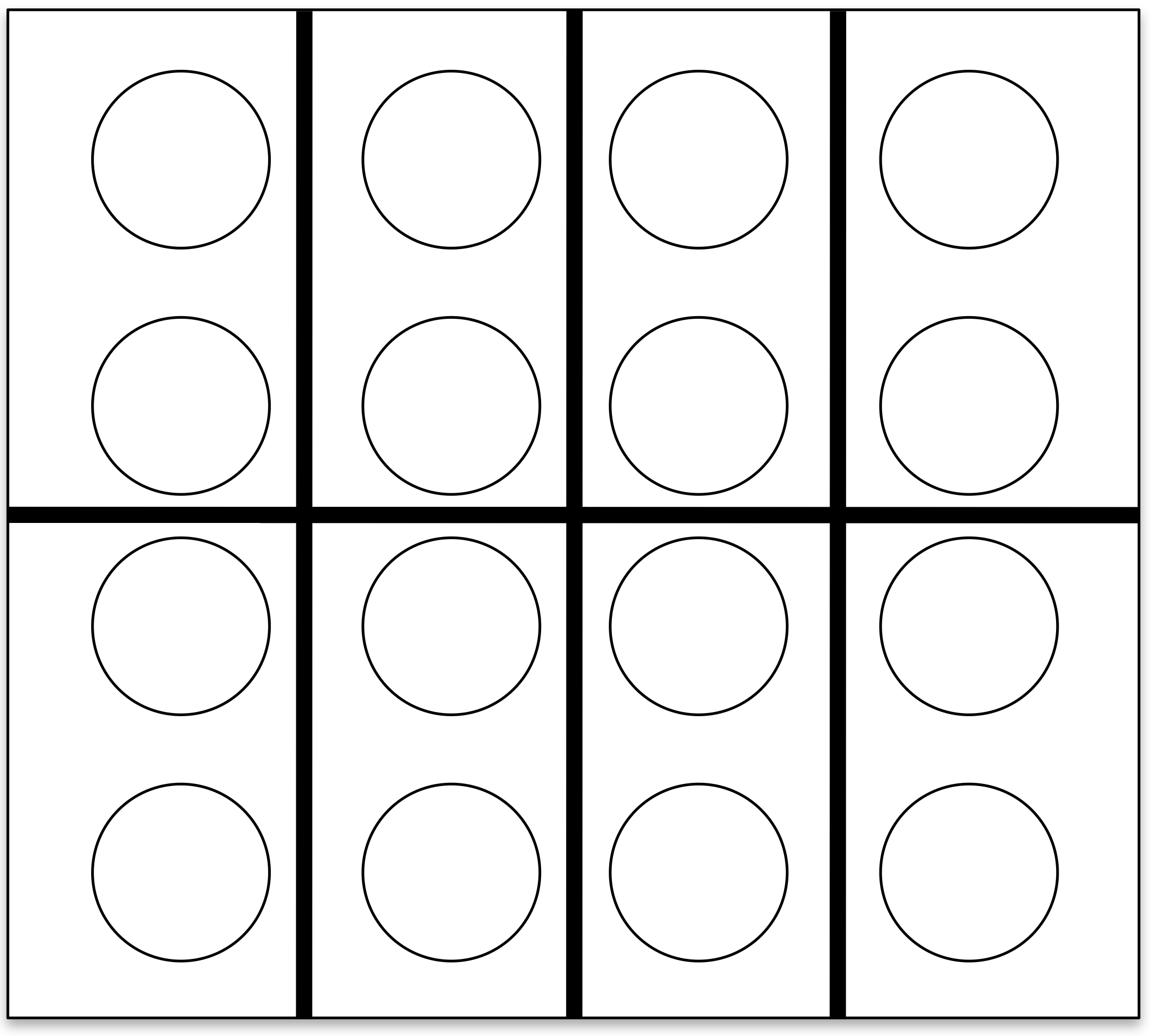}
}
\subfigure[$k=1$]{
\includegraphics[width=.2\textwidth]{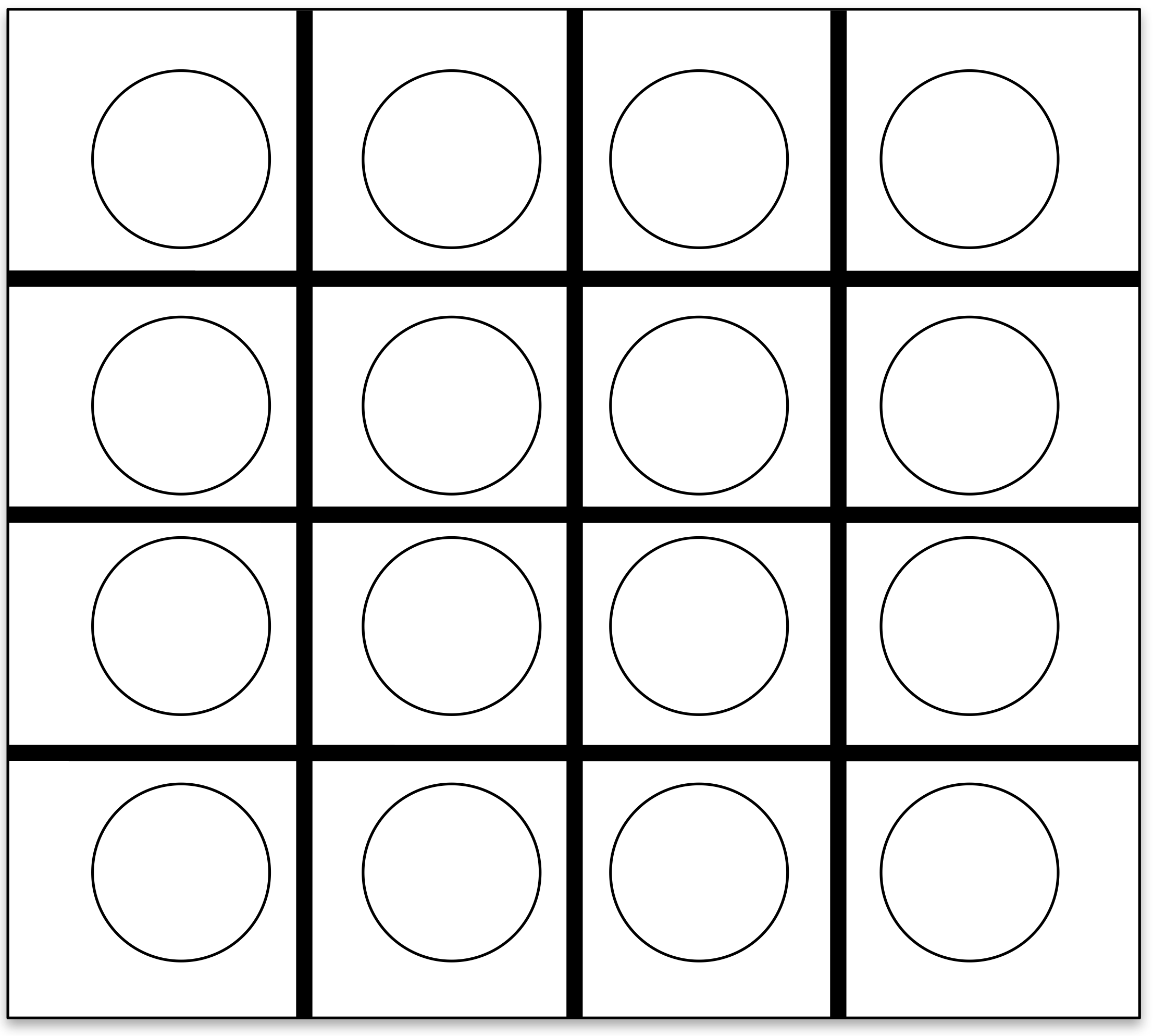}
}
\caption{A machine with $n=16$ cores under various chunk sizes, $k$. Each job is dispatched to a single chunk.}
\label{fig:chunks}
\end{figure}

\subsubsection*{Contribution 3: Near-Optimality of Fixed-Width\\ Scheduling}
\vspace{-.05in}

Fixed-width scheduling policies seem heavily restricted in their
ability to effectively exploit parallelism.  We prove, however, that
certain fixed-width policies, like JSQ-Chunk, can achieve
mean response time converging to that of EQUI as the number of cores,
$n$, becomes large.

We provide a preview of contributions 2 and 3  in Figure \ref{figEQUI}.  Here we see the
mean response times under JSQ-Chunk, as a function of system load $\rho$, for
various choices of $k$.  For a given $\rho$,
this figure shows how to choose the optimal $k^*$.  The figure shows a separate curve for mean response time under each possible choice of $k$ when using JSQ-Chunk.  As expected, increasing $k$ can reduce mean response time, but also reduces the system's stability region.  The shaded regions under these curves denote the values of $\rho$ for which a particular choice of $k$ is optimal.  For example, when $x_8 \leq \rho \leq x_4$, $k^*=4$.  Surprisingly, the performance of JSQ-Chunk, when using the optimal chunk size $k^*$, is not far from that
of EQUI-- we will prove that this difference vanishes in systems with many cores.

\begin{figure}[ht!]
\vspace{-0.1in}
\centering
\includegraphics[width=.5\textwidth]{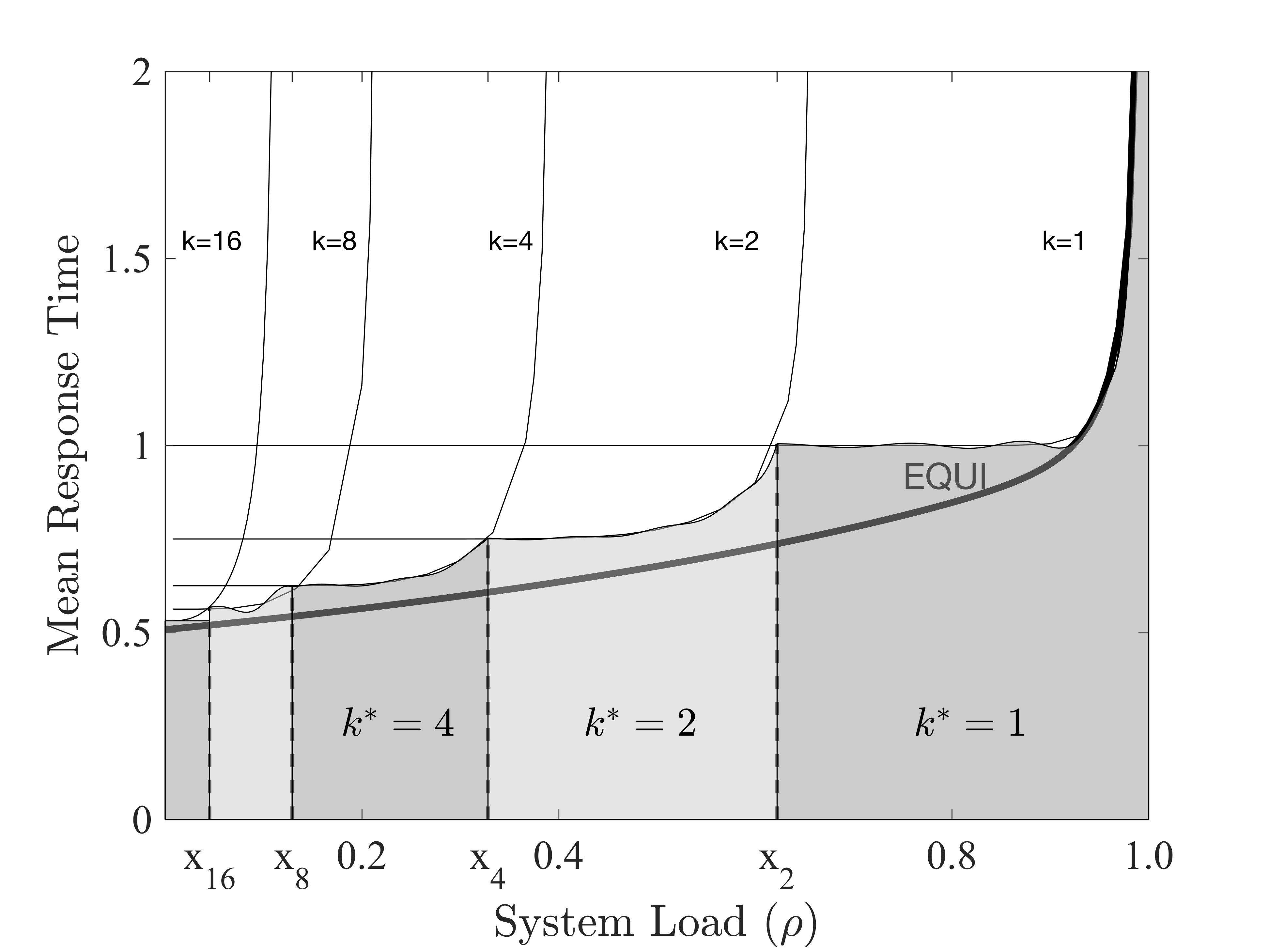}
\vspace{-0.15in}
\caption{Mean response times under JSQ-Chunk (thin line) and EQUI (thick line) when $n=64$.  We assume a speedup curve of Amdahl's law with parameter $p=0.5$ and a shifted Pareto job size distribution with $\alpha=2$ and mean $\mathbb{E}[X]=1$.  
}
\label{figEQUI}
\vspace{-0.17in}
\end{figure}

\subsubsection*{Contribution 4: Multiple Speedup Functions}
\vspace{-.05in}

Thus far we have assumed that all jobs follow the same speedup function $s$.  We show that, when jobs are permitted to have different speedup functions, EQUI is no longer optimal.  We also show how to numerically compute the optimal policy, OPT, for multiple speedup functions (assuming exponentially distributed job sizes).  
Since finding OPT is computationally intensive we introduce a simple class of policies, called GREEDY, that performs well by maximizing the departure rate.  We prove that one policy in this class, \gstar{}, dominates by both maximizing the overall departure rate and deferring parallelizable work.  We also compare OPT to simple fixed-width policies.

\section{EQUI: An Optimal Policy}\label{sec-equi}

In order to effectively parallelize jobs, one imagines that using a \emph{dynamic} level of parallelization might greatly improve mean response time.  Specifically, it makes sense to consider policies where the level of parallelization of an incoming job is determined based on the state of the system at the time when the job arrives.  In addition, if jobs are malleable, active jobs could change their levels of parallelization during their lifetimes as the state of the system changes.

EQUI \cite{Edmonds1999SchedulingIT} is a generalization of Processor Sharing to systems with multiple cores.  Under EQUI, whenever there are $\ell$ jobs in the system, each job runs on $n/\ell$ of the cores.
We define $\mu = 1/\mathbb{E}[X]$ to be the rate at which jobs complete when run on a single core.  EQUI is an idealized policy in that it assumes that when a job runs on $n/\ell$ cores its runtime will be distributed as $\frac{X}{s(n/\ell)}$, even if $n/\ell$ is not an integer.  This corresponds to a service rate of $s(n/\ell)\mu$ for each job when jobs are exponentially distributed.  Additionally, we assume that $s(k)=k$ when $k\leq 1$ to account for the effects of Processor Sharing.  Hence, when there are $\ell$ jobs in the system, the total rate at which jobs complete is $\ell\mu s(n/\ell)$ which is equal to $n\mu$ when $\ell \geq n$.  Importantly, EQUI always processes every job in the system at the same rate, even when there are more than $n$ jobs in the system.

Figure \ref{equiChain} shows the Markov chain for EQUI.  This Markov chain assumes that job sizes are exponentially distributed.  However, Lemma \ref{equi-thm-g} proves that EQUI's performance is actually insensitive to the job size distribution.

In Theorem \ref{equi-thm-global-opt} we prove that EQUI is optimal with respect to mean response time when all jobs follow the same speedup function and are exponentially distributed.

\begin{figure}[h!]
\centering
\includegraphics[width=.75\textwidth]{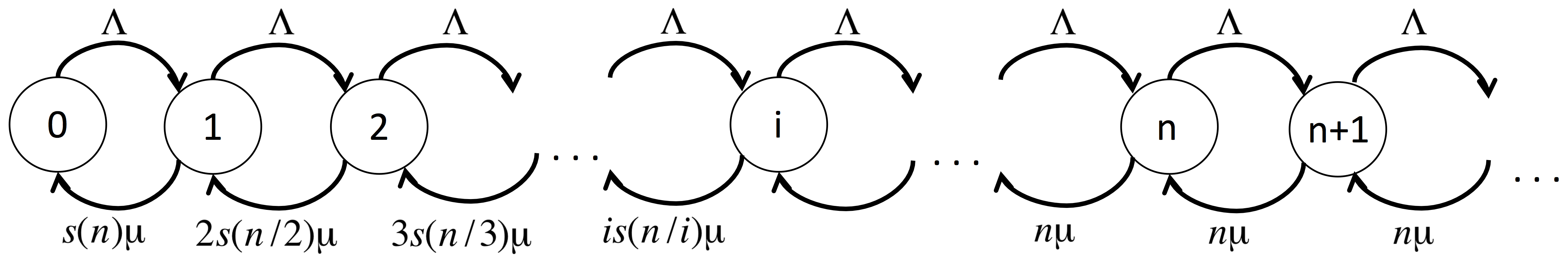}
\caption{Markov chain representing the total number of jobs under EQUI.}
\label{equiChain}
\end{figure}

\subsection{Insensitivity of EQUI}\label{sec:equi:insen}

\begin{lemma}\label{equi-thm-g}
Mean response time under EQUI is insensitive to the job size distribution.
\end{lemma}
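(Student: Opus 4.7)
The plan is to recognize EQUI as a symmetric (Kelly-type) service discipline and then invoke the classical insensitivity results for such queues. Under EQUI with $\ell$ jobs in the system, the total service capacity allocated is $\phi(\ell):=\ell\cdot\mu\, s(n/\ell)$ (capped at $n\mu$ when $\ell\ge n$), split equally across the $\ell$ jobs. Thus EQUI is precisely an M/G/1 processor-sharing queue with state-dependent service rate $\phi(\ell)$, where the share of capacity allocated to any job in the system is simply $\phi(\ell)/\ell$.

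First I would write the model as a Kelly symmetric queue: the state is the list of jobs currently in the system together with their residual work, the total service rate while in state with $\ell$ jobs is $\phi(\ell)$, a fraction $1/\ell$ of which is allocated to each job, and newly arriving jobs are inserted uniformly at random into a position of the list. This satisfies Kelly's symmetry condition, since the rate delivered to a job depends only on the total count $\ell$ and not on which job is being served. Because arrivals are Poisson and the service discipline is symmetric in Kelly's sense, the queue is quasi-reversible and its stationary distribution has product form.

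Next I would apply Kelly's insensitivity theorem (equivalently, the BCMP result specialized to a single symmetric node): for such a symmetric node, the marginal stationary distribution of the number in system depends on the service time distribution $X$ only through its mean $E[X]$. Concretely, letting $\pi_\ell$ be the stationary probability of $\ell$ jobs, one obtains the birth–death form
\begin{equation*}
\pi_\ell = \pi_0 \prod_{j=1}^{\ell} \frac{\Lambda}{\phi(j)}
         = \pi_0 \prod_{j=1}^{\ell} \frac{\Lambda}{j\mu\, s(n/j)},
\end{equation*}
which is identical to the stationary distribution of the Markov chain in Figure \ref{equiChain} (the exponential case). This is the key step and the main technical point: making sure the symmetry hypothesis is verified for EQUI's state-dependent total rate $\phi(\ell)$, so that the classical insensitivity conclusion transfers verbatim to our setting.

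Finally, I would conclude by Little's law: $E[T]=E[N]/\Lambda=\sum_{\ell\ge 0}\ell\pi_\ell/\Lambda$. Since $\pi_\ell$ depends on the job size distribution only through $\mu=1/E[X]$, and $\Lambda$ and $s(\cdot)$ are fixed model primitives, $E[T]$ likewise depends on the job size distribution only through $E[X]$, proving insensitivity. The main obstacle is purely expositional — writing the EQUI dynamics in the symmetric-queue formalism carefully enough that invocation of Kelly's theorem is unambiguous; everything else follows from a direct appeal to that classical result.
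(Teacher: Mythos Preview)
Your proposal is correct and follows essentially the same approach as the paper: both arguments identify EQUI as a processor-sharing system with state-dependent total service rate and then invoke a classical insensitivity result (the paper cites BCMP and Bonald; you cite Kelly's symmetric-queue theorem and BCMP, which are equivalent formulations). Your version is more explicit in writing out the stationary distribution and closing with Little's law, but the substance is the same.
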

\begin{proof}
A system is considered to practice processor sharing if, regardless of how many jobs are in the system, each job receives the same service rate.  Note that under EQUI, when $m$ jobs are in the system, each job receives the same service rate by running on $n/m$ of the cores.  Hence, EQUI fits the definition of a processor sharing system, albeit one with state dependent service rates because the total rate of departures depends on the number of jobs in the system.  It is known from \cite{Baskett:1975:OCM:321879.321887, bonald2002insensitivity} that the mean response time in processor sharing systems with state dependent service rates is insensitive to the job size distribution, thus our performance analysis of EQUI holds for the case where job sizes follow a general distribution.
\end{proof}
\vspace{-.1in}
\subsection{Proving that EQUI is Optimal}
Theorem \ref{equi-thm-global-opt} relies on Lemma \ref{claim-increasing}.
\begin{lemma}\label{claim-increasing}
For any concave, sublinear function, $s$, the function $i\cdot s\left(\frac{n}{i}\right)$ is increasing in $i$ for all $i$ for all $i<n$, and is non-decreasing in $i$ for all $i\geq n$.
\end{lemma}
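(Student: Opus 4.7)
The plan is to rewrite $i \cdot s(n/i) = n \cdot \phi(n/i)$, where $\phi(k) := s(k)/k$, and reduce the claim to a monotonicity property of $\phi$. Since $n/i$ is strictly decreasing in $i$, showing that $\phi$ is non-increasing (respectively, strictly decreasing) on the relevant range immediately yields that $i \cdot s(n/i)$ is non-decreasing (respectively, strictly increasing) in $i$.

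The case $i \geq n$ is immediate: then $n/i \leq 1$, and the paper's convention $s(k) = k$ for $k \leq 1$ gives $\phi(n/i) = 1$, so $i \cdot s(n/i) \equiv n$ is constant and therefore non-decreasing.

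For the case $i < n$ we have $n/i > 1$, and I would first establish that $\phi$ is non-increasing on $(0, \infty)$. This is the standard fact that for a concave function with $s(0) = 0$ (which holds here, since $s(k) = k$ on $[0, 1]$), the secant slope from the origin is non-increasing: for $0 < a < b$, concavity applied to $a = (a/b) \cdot b + (1 - a/b) \cdot 0$ gives $s(a) \geq (a/b) s(b)$, i.e.\ $\phi(a) \geq \phi(b)$.

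The main obstacle is upgrading ``non-increasing'' to ``strictly decreasing'' on $(1, \infty)$, since concavity alone allows flat stretches. I would argue by contradiction: suppose $\phi(a) = \phi(b) = c$ for some $1 < a < b$. Combining the non-increasingness of $\phi$ with the concavity inequality on $[a, b]$ forces $s(k) = ck$ for all $k \in [a,b]$. Now compare the chord slope on $[1, a]$, namely $(ca - 1)/(a - 1)$, with the slope $c$ on $[a, b]$. Concavity requires chord slopes to be non-increasing in position, so $(ca - 1)/(a - 1) \geq c$, which simplifies to $c \geq 1$. This contradicts sublinearity, since $s(a) < a$ implies $c = \phi(a) < 1$. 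Hence $\phi$ is strictly decreasing on $(1, \infty)$, so $i \cdot s(n/i) = n\phi(n/i)$ is strictly increasing in $i$ for $i < n$.
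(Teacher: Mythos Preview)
Your proof is correct and follows essentially the same route as the paper's: both arguments reduce to the fact that the secant slope $\phi(k)=s(k)/k$ is strictly decreasing for $k>1$ (the paper phrases this as $(1+\delta)s\!\left(\frac{n}{i(1+\delta)}\right)>s\!\left(\frac{n}{i}\right)$, which is exactly $\phi(y)>\phi(x)$ for $y<x$). Your version is in fact more careful than the paper's, which simply asserts the strict inequality ``since $s$ increases sublinearly''; your chord-slope contradiction argument is what is actually needed to upgrade the concavity-based non-increase of $\phi$ to a strict decrease on $(1,\infty)$.
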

\begin{proof}
See Appendix \ref{pf-increasing}
\end{proof}
\begin{theorem}\label{equi-thm-global-opt}
Within our model, assuming jobs are malleable, have exponentially distributed sizes, and all follow the same speedup function, $s$, $\mathbb{E}[T]^{\mbox{EQUI}}\leq \mathbb{E}[T]^{\mbox{P}}$ for any scheduling policy $P$.
\end{theorem}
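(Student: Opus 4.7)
The plan is to reduce optimality to a pointwise comparison of departure rates in the number-in-system chain, then upgrade it to stochastic dominance via coupling and invoke Little's law. Because job sizes are exponential, jobs are statistically identical, and all jobs share the same speedup function $s$, memorylessness and symmetry let me restrict attention, without loss of optimality, to policies that are Markov in the number of jobs $i$. This is a standard MDP reduction, but, as noted below, is the subtle step.

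Next, I would show EQUI has the largest one-step departure rate in every state $i$. Under any allocation $(k_1,\ldots,k_i)$ with $\sum_j k_j \leq n$, the total service rate is $\mu\sum_j s(k_j)$. By concavity of $s$ (Jensen's inequality), for allocations that serve all $i$ jobs, $\sum_j s(k_j)\leq i\cdot s(n/i)$; if a policy instead idles some jobs and serves only $j<i$ of them with $n/j$ cores each, it achieves at most $\mu\cdot j\cdot s(n/j)$, which is in turn dominated by $\mu\cdot i\cdot s(n/i)$ via Lemma \ref{claim-increasing}. Thus EQUI realizes this bound, so in every state $i$ its total departure rate is pointwise maximal across all policies, while all policies share the same arrival rate $\Lambda$.

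Finally, I would uniformize both chains at a common rate $\nu\geq \Lambda+n\mu$ (valid since $s$ is bounded) and couple them so that each tick draws a single uniform variate used by both chains to decide arrival versus departure versus hold. The pointwise rate inequality implies that EQUI's departure probability weakly dominates that of any competitor $P$ in every shared state while arrival probabilities agree, so a standard sample-path induction gives $N^{\mbox{EQUI}}(t)\leq N^{P}(t)$ almost surely from identical initial conditions. Passing to stationarity and applying Little's law with common arrival rate $\Lambda$ yields $\mathbb{E}[T]^{\mbox{EQUI}}\leq \mathbb{E}[T]^{P}$.

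The main obstacle is the very first step: ruling out that history-dependence or future knowledge could help the competitor $P$. The cleanest route is to invoke a general stochastic-comparison theorem for continuous-time MDPs with pointwise ordered transition kernels, applied to the full joint state, using that the residual service of any live job is again exponential and that jobs are type-symmetric, so the only statistic relevant for future costs is the job count $i$. Once that reduction is in hand, steps two and three combine concavity of $s$, Lemma \ref{claim-increasing}, and textbook coupling.
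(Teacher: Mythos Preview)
Your proposal is correct and follows essentially the same approach as the paper: both use concavity of $s$ (Jensen) together with Lemma~\ref{claim-increasing} to show that EQUI attains the maximal total departure rate $i\cdot s(n/i)\mu$ in every state $i$, then compare the resulting number-in-system chains and finish with Little's law. You are more explicit than the paper on two points---the reduction to Markov-in-$i$ policies and the uniformization/coupling that justifies the stochastic ordering---both of which the paper leaves implicit in the phrase ``compare the Markov chain corresponding to $P$ with the Markov chain for EQUI.''
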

\begin{proof}
Let $P$ be a scheduling policy which processes malleable jobs and currently has $i$ active jobs (the system is in state $i$).  In every state, $i$, $P$ must decide (i) how many jobs, $j$, to run and (ii) how to allocate the $n$ cores amongst the $j$ jobs.  Hence, for some $ \vec{\theta}$, the rate of departures from state $i$ under $P$ is at most
\begin{equation}\mu\sum_{k=1}^j s(n\theta_k)\label{eq-dep}\end{equation}
where $0< j \leq \min(i,n)$, $\theta_k > 0$ for all $1\leq k \leq j$, and $\sum_{k=1}^j \theta_k=1$.  Note that \eqref{eq-dep} is an upper bound on $P$'s departure rate, since it assumes that the $j$ jobs run on disjoint partitions of cores; given that $s$ is concave the total rate of departures will not increase when a single core (or fraction of a core) is allocated to more than one job.

We also know that
$$\frac{1}{j}\sum_{k=1}^j s\left(n\theta_k\right)\leq s\left(\frac{n}{j}\right)$$
again, by the concavity of $s$.  Hence,
$$\sum_{k=1}^j s(n\theta_k) \leq j \cdot s\left(\frac{n}{j}\right) \indent \forall\ 0< j\leq i,\ \forall\ \vec{\theta} \in (0, 1]^j$$
and thus an upper bound on $P$'s total rate of departures from any state is of the form
$$j \cdot s\left(\frac{n}{j}\right)\mu.$$

By Lemma \ref{claim-increasing}, $j\cdot s\left(\frac{n}{j}\right)$ is non-decreasing in $j$.  Thus, an upper bound on $P$'s rate of departures from any state $i$ is 
\begin{equation}i \cdot s\left(\frac{n}{i}\right)\mu.\label{equi-dep-ub}\end{equation}
Furthermore, we can see that EQUI achieves this departure rate in every state, $i$.  Hence, \eqref{equi-dep-ub} is the maximal rate of departures from any state, $i$.

We can now compare the Markov chain corresponding to $P$ with the Markov chain for EQUI (Figure \ref{equiChain}) and relate the number of jobs in the system, $N$, under both policies.  Both chains have the same arrival rates, but the rate of departures under EQUI is greater than or equal to the rate of departures under $P$ in any state, $i$, since the departure rate chosen by EQUI is maximal.  Thus,
$$\mathbb{E}[N]^{\mbox{EQUI}}\leq \mathbb{E}[N]^{\mbox{P}}$$
and, by Little's Law,
$$\mathbb{E}[T]^{\mbox{EQUI}}\leq \mathbb{E}[T]^{\mbox{P}}.$$

\end{proof}
\vspace{-.1in}
\subsection{EQUI with General Size Distributions}\label{sec:opt:general}

Theorem \ref{equi-thm-global-opt} requires exponentially distributed job sizes.  One might think that EQUI's insensitivity (Lemma \ref{equi-thm-g}) might imply that EQUI is optimal under all job size distributions.  However, this is false.  Consider for example the case where jobs follow a Pareto distribution with decreasing hazard rate.  Independent of the speedup function, the optimal policy should devote more cores to jobs which have lower ages, and are thus more likely to finish in the immediate future.  This differs from EQUI's equal division of resources.

In general, optimal scheduling of jobs with general job size distributions is an extremely difficult problem.  Even optimally scheduling generally distributed jobs on a \emph{single core} requires using a policy derived from the Gittins index \cite{ziv}.  In our case, where the optimal scheduling policy must allocate many cores, finding the optimal policy is at least as difficult.  In fact, it is not known how to even numerically compute the optimal policy in this case.  We therefore defer this work to a future paper. 

\section{Fixed-Width Policies}\label{rand}
While EQUI performs very well, it requires jobs to change their levels of parallelization while running.  Fixed-width policies require that jobs be moldable, but not necessarily malleable, which is more realistic for certain workloads.  In general, a fixed-width policy is any policy which chooses a fixed level of parallelization, $k$, to use for each arriving job.  Every arrival is parallelized across $k$ cores, and is run on the same $k$ cores for its entire lifetime.  In this section, we consider several natural fixed-width policies.

We begin by defining the Random dispatching policy, which parallelizes each arriving job across $k$ cores chosen uniformly at random.  It turns out (see Theorem \ref{rc-vs-r}) that Random is dominated by the Random-Chunk policy; hence we devote Section \ref{rc} to Random-Chunk.  In Section \ref{jsq}, we consider an improved fixed-width policy called JSQ-Chunk, which dispatches arrivals to the chunk with the shortest queues.

Like EQUI, Random-Chunk and JSQ-Chunk are insensitive to the job size distribution.  Thus, our analysis of these policies admits general job size distributions, $X$.
\vspace{-.1in}
\subsection{Random-Chunk}\label{rc}
The motivation behind the Random-Chunk policy is that we would like to use random assignment while ensuring that the different pieces of a single job complete at the same time.  For a level of parallelization, $k$, we begin by partitioning the cores into $c=n/k$ chunks of size $k$.
We will only consider values of $k$ which divide $n$, creating a uniform partition of the cores.  When a job arrives, a chunk is selected uniformly at random, and the job is parallelized across all $k$ cores in this chunk.

It is easy to see that under Random-Chunk each piece of a job will at all times experience the same state, and hence each of the $k$ job pieces will complete at the same time.  Consequently, the response time for a job is equal to the response time of any of its $k$ pieces, which in turn is simply the mean response time at a single core.  This allows us to easily derive an expression for the overall mean response time under Random-Chunk with a level of parallelization $k$ in Theorem \ref{rc-thm-mrt} below.  

\begin{theorem}\label{rc-thm-mrt}
The mean response time under Random-Chunk with a level of parallelization, $k$, is given by
\begin{equation*}
	\mathbb{E}[T]^{\mbox{\scriptsize Rand-Chunk}}=\frac{\mathbb{E}[X]}{s(k) - k\rho}.\label{rc-ps}
\end{equation*}
\end{theorem}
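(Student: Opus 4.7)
The plan is to reduce Random-Chunk to a collection of independent M/G/1 processor-sharing queues, one per chunk, and then apply the standard insensitive M/G/1-PS response time formula.

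First I would observe that the $c = n/k$ chunks evolve independently under Random-Chunk: a job dispatched to chunk $i$ never uses cores outside chunk $i$, and by Poisson thinning each chunk sees an independent Poisson arrival stream of rate $\Lambda/c = \lambda n \cdot k/n = \lambda k$. Next I would argue, as the paragraph preceding the theorem already notes, that within a chunk all $k$ pieces of any given job experience an identical state trajectory, so each job's response time equals the sojourn time of any one of its pieces in that chunk. Thus it suffices to analyze a single chunk in isolation.

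The key step is to identify that single chunk as an M/G/1 processor-sharing queue. A job of size $X$ running alone on the $k$-core chunk has service requirement $X/s(k)$, so the effective service-time distribution at the chunk is $X/s(k)$ with mean $\mathbb{E}[X]/s(k)$ and server rate $s(k)\mu$. When $m$ jobs are simultaneously present, each of the $k$ cores time-shares evenly among them, so every job receives an equal $1/m$ share of the chunk's capacity; this is exactly processor sharing. The utilization of this chunk is therefore
\[
\rho_{\text{chunk}} \;=\; (\lambda k)\cdot\frac{\mathbb{E}[X]}{s(k)} \;=\; \frac{k\rho}{s(k)},
\]
using the definition $\rho = \lambda \mathbb{E}[X]$.

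Finally I would invoke the well-known insensitive mean response time formula for an M/G/1-PS queue, namely $\mathbb{E}[S]/(1-\rho_{\text{chunk}})$, to obtain
\[
\mathbb{E}[T]^{\text{Rand-Chunk}} \;=\; \frac{\mathbb{E}[X]/s(k)}{1 - k\rho/s(k)} \;=\; \frac{\mathbb{E}[X]}{s(k)-k\rho},
\]
valid whenever $s(k) > k\rho$ (the stability condition at each chunk). The only mildly subtle step is justifying that the chunk is a genuine processor-sharing system rather than $k$ coupled servers; this is handled by the synchronized-state observation that makes each job's sojourn identical to that in a single PS queue of rate $s(k)\mu$. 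The remainder is a direct substitution into a standard result, so no new machinery is required.
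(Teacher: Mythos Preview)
Your proposal is correct and follows essentially the same approach as the paper: reduce to a single M/G/1-PS queue (the paper phrases it as ``one core'' while you phrase it as ``one chunk,'' but since all cores in a chunk are synchronized these are the same object), compute the per-queue load $\rho_k = k\rho/s(k)$, and apply the standard insensitive PS mean-response-time formula. Your write-up is slightly more explicit about the Poisson-thinning and the PS justification, but there is no substantive difference in method.
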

\begin{proof}
Recall that cores follow the PS scheduling discipline.  We know from \cite{kleinrock1975queueing2} that the mean response time in a single $M/G/1/PS$ queue with arrival rate $\Lambda$, mean job size $\mathbb{E}[X]$ and load $\rho=\frac{\Lambda \mathbb{E}[X]}{n}$ is given by
$$\mathbb{E}[T] = \frac{\mathbb{E}[X]}{1-\rho}.$$

It will suffice to analyze the response time of one core in our system.  Under a level of parallelization of $k$, the arrivals into a single core follow a Poisson process with rate 
$$\Lambda_k:=\frac{\Lambda}{c}$$
 and the mean service requirement of job pieces is $\mathbb{E}[X_k]$ (see \eqref{eq:service}).  Thus, we can define the load on a single core in this system to be 
 $$\rho_k := \Lambda_k \mathbb{E}[X_k].$$
 This allows us to express $\mathbb{E}[T]^{\mbox{\scriptsize Rand-Chunk}}$  under a level of parallelization of $k$ as:
$$\mathbb{E}[T]^{\mbox{\scriptsize Rand-Chunk}} = \frac{\mathbb{E}[X_k]}{1-\rho_k}=\frac{\mathbb{E}[X_k]}{1-\rho\frac{k}{s(k)}}=\frac{\mathbb{E}[X]}{s(k) - k\rho}.$$
\end{proof}

\begin{corollary} \label{cor-k-str} The optimal Random-Chunk policy uses chunk size $k^*$, where 
$$k^*=\argmin_k \frac{\mathbb{E}[X]}{s(k) - k\rho}.$$
If we define the vector $\pmb{k}=(k_1,k_2,\ldots k_m)$ to be the factors of $n$ in increasing order, this implies:

\begin{equation*}k^* = \begin{cases}
k_m & 0 \leq \rho \leq \frac{s(k_{m}) - s(k_{j-1})}{\mathbb{E}[X]}\\
k_i & \frac{s(k_{i+1}) - s(k_{i})}{\mathbb{E}[X]} < \rho \leq  \frac{s(k_{i}) - s(k_{i-1})}{\mathbb{E}[X]}, 1 < i < m\\
k_1 & \frac{s(k_{2}) - s(k_{1})}{\mathbb{E}[X]} < \rho.\\
\end{cases}
\end{equation*}
\end{corollary}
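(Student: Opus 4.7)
The plan is to reduce the claim to a one-variable discrete optimization and then read off the threshold structure from concavity of $s$. By Theorem~\ref{rc-thm-mrt}, $\mathbb{E}[T]^{\text{Rand-Chunk}} = \mathbb{E}[X]/(s(k) - k\rho)$, and the numerator is a constant in $k$. Minimizing the response time over admissible chunk sizes is therefore equivalent to maximizing $f(k) := s(k) - k\rho$ over the factors $k_1 < \cdots < k_m$ of $n$, which immediately yields the first displayed expression for $k^\ast$.

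For the piecewise description I would compare adjacent factors pairwise. For consecutive factors $k_i < k_{i+1}$, direct algebra gives
\[
f(k_i) \geq f(k_{i+1}) \iff \rho \geq \frac{s(k_{i+1}) - s(k_i)}{k_{i+1} - k_i},
\]
so the indifference threshold between $k_i$ and $k_{i+1}$ is the chord-slope of $s$ on $[k_i, k_{i+1}]$. Since $s$ is concave, these chord-slopes are non-increasing in $i$, and hence $i \mapsto f(k_i)$ is itself concave in the index. Its maximizer is therefore the unique $k_i$ for which the chord-slope immediately to its left still exceeds $\rho$ while the chord-slope immediately to its right has dropped below $\rho$. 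Equivalently, as $\rho$ grows from $0$ to $1$ the optimal $k^\ast$ decreases monotonically through the factor sequence, switching from $k_{i+1}$ to $k_i$ exactly when $\rho$ crosses the corresponding chord-slope; the two boundary cases ($k^\ast = k_m$ for small $\rho$ and $k^\ast = k_1$ for large $\rho$) correspond to one of the two comparison chords not existing.

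The main obstacle, if one can call it that, is verifying that concavity of $s$ transfers cleanly to the discrete sub-sequence of factors $(k_i)$, which is immediate from the standard property that chord-slopes of a concave function over consecutive intervals are ordered. The rest is elementary bookkeeping to line the three regimes up with the threshold expressions in the corollary statement.
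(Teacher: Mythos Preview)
Your argument is sound and supplies exactly the details the paper omits: the corollary is stated immediately after Theorem~\ref{rc-thm-mrt} without proof. Reducing to the maximization of $f(k)=s(k)-k\rho$, comparing adjacent factors via chord slopes, and invoking concavity of $s$ to order those slopes is the natural route and is correct.

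There is one point, however, where the ``elementary bookkeeping'' does not go through as you claim. Your crossover between $k_i$ and $k_{i+1}$ occurs at
\[
\rho \;=\; \frac{s(k_{i+1})-s(k_i)}{k_{i+1}-k_i},
\]
whereas the printed corollary places $\mathbb{E}[X]$ in the denominator. These do not agree, and it is your expression that is correct: it is dimensionless (as $\rho$ must be), it is the quantity that the pairwise comparison $f(k_i)\ge f(k_{i+1})$ actually produces, and it is what concavity of $s$ is guaranteed to order monotonically. By contrast, the raw differences $s(k_{i+1})-s(k_i)$ need not be monotone in $i$ when the factors $k_i$ are unevenly spaced, so the intervals in the printed statement could even overlap or be empty. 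The denominators $\mathbb{E}[X]$ (and the stray index $k_{j-1}$ in the first line, which should read $k_{m-1}$) appear to be typographical errors in the statement. Rather than asserting that the thresholds line up, you should flag this discrepancy explicitly.
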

The results of this analysis are shown in Figure \ref{fig-rc-fcfs}.  This figure shows the mean response time under various choices of $k$, and the shaded regions below the curves denote the values of $\rho$ for which each choice of $k$ is optimal.  Under high load, we see that the system is unable to tolerate wasting resources by parallelizing jobs, and thus each job is run on a single core ($k^*=1$).  Conversely, under light load it is beneficial to pay this cost of parallelization in order to reduce the service requirement of individual jobs.
 \begin{figure}[h!t]
\centering
\subfigure{
\includegraphics[width=.6\textwidth]{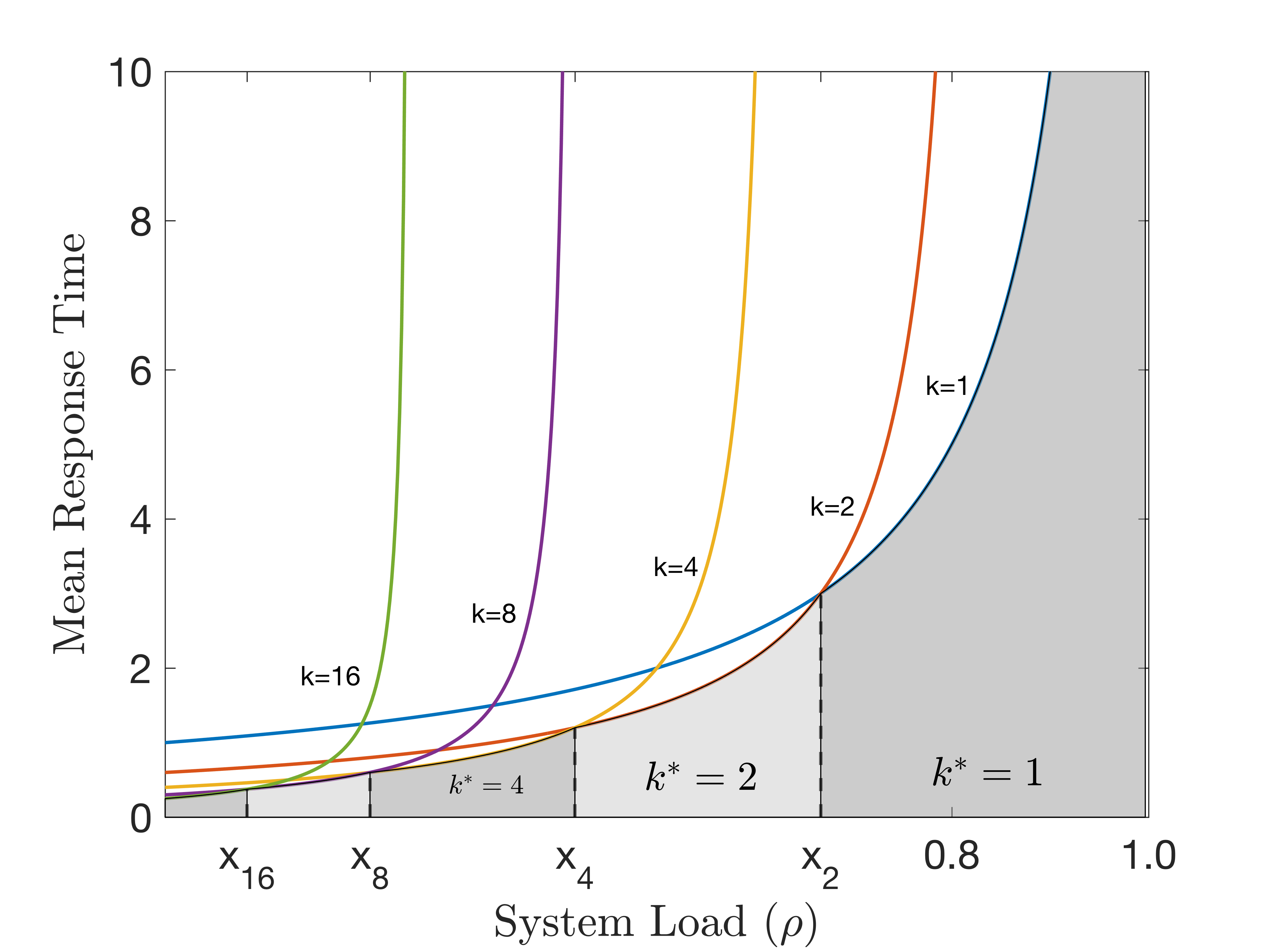}
}
\caption{Results from analysis.  Mean response times under Random-Chunk dispatching, a system with $n=16$ cores, various choices of the level of parallelization $k$, and a hyperexponential job size distribution with $\mathbb{E}[X]=1$ and $C^2=10$.  The speedup curve used is Amdahl's law with parameter $p=0.8$.}
\label{fig-rc-fcfs}
\end{figure} 

\subsection*{Random-Chunk vs. Random}
As we have seen, Random-Chunk yields much easier analysis than Random.  Fortunately, Theorem \ref{rc-vs-r} tells us that Random-Chunk results in lower mean response times than Random, rendering the analysis of Random irrelevant.
\begin{theorem}
The mean response time under Random-Chunk with a fixed level of parallelization, $k$, is less than under Random with the same $k$.\label{rc-vs-r}
\end{theorem}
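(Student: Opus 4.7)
The plan is to exploit the fact that the single-core queueing dynamics are identical under Random and Random-Chunk, while the way a job's response time aggregates its $k$ pieces differs in a crucial way. Under Random-Chunk, all $k$ pieces of a job live on the $k$ cores of a single chunk; since every core in that chunk processes exactly the same set of job-pieces under PS, the chunk's cores evolve in lock-step and every piece of a given job completes simultaneously. Thus the job's response time equals the response time of any one of its pieces. Under Random, the $k$ pieces are dispatched to $k$ independently selected cores whose states typically differ, so the job's response time is the maximum of its $k$ piece-level response times.

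Next I would verify that a single core is governed by the same M/G/1/PS queue under both policies. Under Random, Poisson thinning shows that the arrival process of pieces at any fixed core is Poisson with rate $\Lambda_k = \Lambda k/n$; each piece has an i.i.d.\ service requirement $X_k = X/s(k)$ and is served under PS, so the core is an M/G/1/PS queue with load $\rho_k = \Lambda_k \mathbb{E}[X_k]$. Under Random-Chunk, jobs arrive to a fixed chunk at rate $\Lambda/c = \Lambda_k$, and each of the $k$ cores in the chunk receives exactly this stream of arrivals with the same service-requirement distribution, producing an M/G/1/PS queue with the same load. Writing $T_i^{\mathrm{R}}$ for the response time of the $i$th piece of a tagged job under Random and $T^{\mathrm{RC}}$ for its response time under Random-Chunk, this gives $T_i^{\mathrm{R}} \stackrel{d}{=} T^{\mathrm{RC}}$ for each $i$.

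Combining these observations yields the theorem. Since under Random-Chunk the job completes when any single piece completes, while under Random it must wait for the slowest piece,
\[
\mathbb{E}[T]^{\mbox{\scriptsize Random}} \;=\; \mathbb{E}\!\left[\max_{1 \le i \le k} T_i^{\mathrm{R}}\right] \;\geq\; \mathbb{E}[T_1^{\mathrm{R}}] \;=\; \mathbb{E}[T^{\mathrm{RC}}] \;=\; \mathbb{E}[T]^{\mbox{\scriptsize Rand-Chunk}},
\]
with strict inequality whenever $k>1$, since then the $k$ per-core response times of a single job under Random are almost surely not all equal. The main obstacle is justifying this marginal equivalence cleanly: although the per-core arrival processes under Random are correlated across cores (the $k$ pieces of a job always hit $k$ distinct cores simultaneously), the response-time law at any single PS core depends only on its own marginal arrival and service process, which Poisson thinning shows coincides with the chunk-internal arrival process under Random-Chunk.
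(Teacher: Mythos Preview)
Your proof is correct and follows essentially the same approach as the paper: both argue that the single-core marginal dynamics coincide under the two policies, that Random-Chunk's per-job response time equals a single piece's response time while Random's equals the maximum of the $k$ piece response times, and conclude via $T_1 \leq_{st} \max_i T_i$. You supply a little more detail than the paper does---the explicit Poisson-thinning justification for the marginal arrival rate and the remark about cross-core correlations not affecting single-core marginals---but the structure of the argument is the same.
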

\begin{proof}
Regardless of whether we are using Random or Random-Chunk dispatching, the arrival process into a given core is Poisson with rate $\Lambda_k$ and the service requirement of job pieces is distributed as $X_k$.  Thus, the response time for the $i$th job \emph{piece} is distributed as some random variable $T_i$, and the response times of all pieces are identically distributed in both cases.

Under Random-Chunk, the response times of a job's pieces are identical.  Hence, under Random-Chunk, the response time of a \emph{job} is simply the response time of one of its pieces.  Under Random, the response time of a job is the \emph{maximum} of the $k$ response times of its $k$ pieces.  Thus, for a single job, the response time under Random-Chunk has distribution $T_1$ while under Random the response time of a single job has distribution 
$$\max(T_1, T_2, \ldots , T_k)$$ 
where $T_i \sim T$.  Since $T \leq_{st} \max(T_1, T_2, \ldots, T_k)$,
 we see that, for a given level of parallelization, Random-Chunk's mean response time is less than that of Random. 
\end{proof}
\vspace{-.1in}
\subsection*{Mixed-Random-Chunk}\label{mrc}
Until now, we have studied Random-Chunk policies where each chunk consists of $k$ cores.  It is conceivable, however, that allowing for type 1 chunks of size $k_1$ and type 2 chunks of size $k_2$ could improve mean response time.  

Having two chunk sizes introduces more dispatching parameters.  One must decide how many cores, $a$, to devote to type 1 chunks.  One also must decide what fraction, $p$, of arrivals to assign to type 1 chunks.  Finding the optimal Mixed-Random-Chunk policy then requires simultaneously optimizing over not just $k_1$ and $k_2$, but additionally $a$ and $p$.
 
We find that while some instances do exist where a Mixed-Random-Chunk policy is better than Random-Chunk, these instances occur very sparsely throughout the parameter space ($k_1, k_2, a,p$), and the advantage gained is very slight at best.  In fact, if we limit ourselves to cases where the arrival rate into each chunk is proportional to the chunk's size, we can prove that Mixed-Random-Chunk is never advantageous in terms of mean response time or variance of response time (see Appendix \ref{sec-mrc}). 
\vspace{-.1in}
\subsection{JSQ-Chunk}\label{jsq}

The Random-Chunk policy can be greatly improved by considering the state of the system when dispatching jobs.  Under JSQ-Chunk with a level of parallelization, $k$, we again partition the system into chunks of size $k$.  When a job arrives to the system, it is parallelized across the $k$ cores in the chunk that is currently serving the fewest jobs.  We expect that JSQ-Chunk will have much lower mean response times than Random-Chunk and will admit higher values of $k^*$.  To determine the correct $k^*$ for JSQ-Chunk, it is important that we can accurately analyze mean response time under JSQ-Chunk. 

JSQ-Chunk is based on JSQ, an old and well-studied dispatching policy in the traditional queueing literature which assumes no parallelization.  Under JSQ, the system consists of $c$ queues and every incoming arrival is immediately dispatched to the queue with the smallest number of jobs.  While analyzing response times in a traditional (non-parallel) system with JSQ dispatching is notoriously hard, \cite{Nelson1993PerfromEval} provides a good closed-form approximation of mean response time.  The approximation in \cite{Nelson1993PerfromEval} assumes FCFS queues with exponentially distributed job sizes.  However, as shown in \cite{gupta2007analysis}, the case of PS queues with generally distributed job sizes has nearly the same mean response time as the case of FCFS queues with exponentially distributed service requirements.  Thus, the approximation in \cite{Nelson1993PerfromEval} still applies to our model of PS queues and generally distributed job sizes.  This approximation states:
$$\mathbb{E}[T]^{\mbox{JSQ}}(\Lambda, c, \mathbb{E}[X])\approx W_{M/M/c}(\rho)S(\rho)R(\rho) + \mathbb{E}[X],$$
where $\Lambda$ is the arrival rate, $c$ is the number of queues, $X$ is a random variable representing the size of a job, and $\rho=\Lambda\mathbb{E}[X]/c$.  $W_{M/M/c}$ is the mean time in queue in an $M/M/c$ queueing system, and $S(\rho)$ and $R(\rho)$ are experimentally derived correction factors, given in Appendix \ref{app-np}.

\begin{observation}\label{obsJSQ}
Our parallel system with JSQ-Chunk dispatching with level of parallelization $k$, $n$ cores, $c=n/k$ chunks, total arrival rate $\Lambda$, and job size distribution $X$ has the same mean response time as a traditional JSQ queueing system with, $c$ queues, a total arrival rate of $\Lambda$, and job size distribution $X_k=\frac{X}{s(k)}$.
\end{observation}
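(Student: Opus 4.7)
The plan is to establish a sample-path coupling between the parallel system under JSQ-Chunk and the traditional JSQ queueing system with service distribution $X_k = X/s(k)$. The core step is to show that within a single chunk, all $k$ cores see exactly the same set of job pieces at every moment in time, so the chunk behaves as a single $M/G/1/PS$ queue whose service requirement is $X_k$.

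First, I would argue that the invariant ``every core in the chunk is processor-sharing the same number of job pieces'' is maintained under JSQ-Chunk. When a job arrives to a chunk, a new piece is created simultaneously on all $k$ cores. Because each core performs processor sharing and each piece on core $j$ is an identical $1/s(k)$-fraction of the original job, each piece accumulates service at the same rate as a function of time. Therefore all $k$ pieces of a job hit their service requirement $X/s(k)$ at the same instant, and the job departs from all $k$ cores simultaneously. This keeps the per-core population identical across the chunk for all time, confirming the invariant inductively in the order of arrival and departure events.

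Given the invariant, each chunk is equivalent to a single $M/G/1/PS$ queue: if there are $m$ jobs in the chunk, every piece on every core gets a $1/m$ fraction of its core, which is precisely the sharing rule in a single-server PS queue with $m$ jobs of service requirement $X_k$. The chunk's arrival process and departure process, measured in jobs (not pieces), coincide with those of this equivalent queue. Consequently JSQ-Chunk in the parallel system and JSQ in the traditional system make their dispatching decisions based on exactly the same observable, namely the vector of per-queue/per-chunk job counts, and can be coupled so that these counts evolve identically. Mean response time is then identical in the two systems, either by noting that a job's sojourn equals that of its equivalent single-server job on a sample path, or by applying Little's Law to the coupled number-in-system processes.

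The only mildly delicate step is the first one: verifying that the ``pieces complete simultaneously'' property is preserved through arrivals into and departures out of a chunk. Since arrivals insert a piece on every core of the chosen chunk at once, and the processor-sharing dynamics ensure that the accumulated service on each piece of a job is a deterministic function of elapsed time and the common per-core job count, the invariant is preserved trivially; no subtler argument (e.g.\ about tie-breaking among simultaneous departures) is needed.
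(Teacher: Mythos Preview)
Your proposal is correct and follows essentially the same approach as the paper. The paper justifies the observation with a single sentence---``under JSQ-Chunk, all the cores within a chunk have the same queue state and receive the same arrivals''---and your argument is simply a more careful, fleshed-out version of that same invariant-plus-coupling idea.
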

Observation \ref{obsJSQ} follows from the fact that, under JSQ-Chunk, all the cores within a chunk have the same queue state and receive the same arrivals.  This allows us to directly apply the approximation in \cite{Nelson1993PerfromEval} to analyze JSQ-Chunk as follows:
\begin{align}
\mathbb{E}[T]^{\mbox{JSQ-Chunk}}(\Lambda, n, \mathbb{E}[X], k) &= \mathbb{E}[T]^{\mbox{JSQ}}(\Lambda, n/k, \mathbb{E}[X_k]).
\label{eq-np-approx}
\end{align}
\begin{figure*}[h!t]
\centering
\subfigure[JSQ-Chunk $n=16$]{
\includegraphics[width=.45\textwidth]{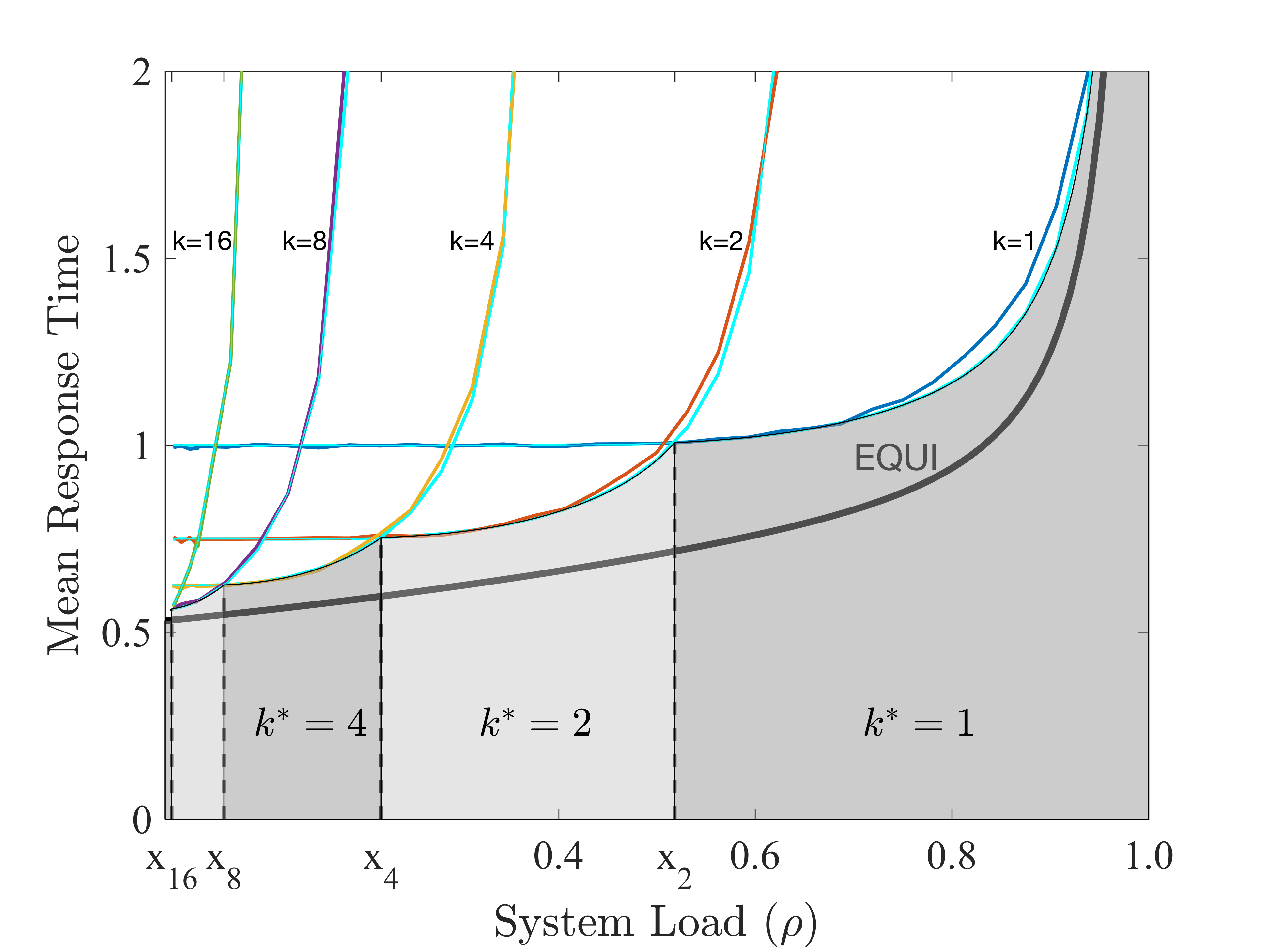}
}
\subfigure[Random-Chunk $n=16$]{
\includegraphics[width=.45\textwidth]{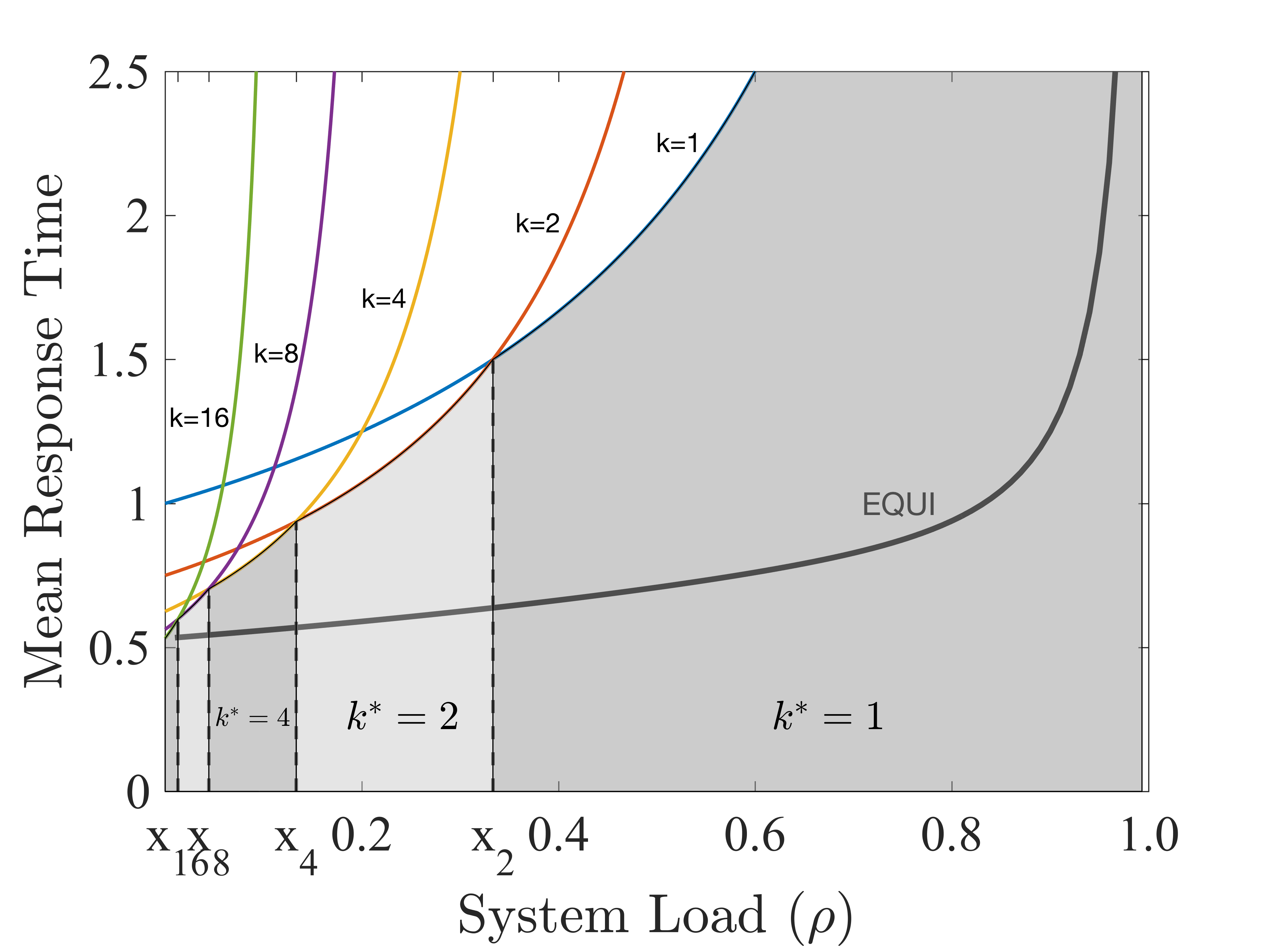}
}
\caption{Response times under (a) JSQ-Chunk dispatching and (b) Random-Chunk dispatching.  In (a), results are show from both analysis and simulation (the jagged lines), which largely overlap.  Both graphs also show the results of analysis of EQUI.  We assume a speedup curve of Amdahl's law with a parameter of $p=0.5$ and exponentially distributed job sizes with mean $\mathbb{E}[X]=1$.}
\label{figJSQ}
\end{figure*}
\section{JSQ-Chunk Converges to EQUI}

\subsection{The Performance of JSQ-Chunk}

Figure \ref{figJSQ} shows the mean response time under JSQ-Chunk (left graph) as a function of system load $\rho$, as computed using the approximation in \eqref{eq-np-approx}.  In addition, we also show results from simulation which lie almost on top of the approximation results.  Fortunately, we see that the approximation for JSQ-Chunk suffices to accurately derive $k^*$ values.  Analogous results of the analysis of Random-Chunk are shown in the right graph.  We find that, for a given load $\rho$, $k^*$ is generally lower under Random-Chunk dispatching as compared with JSQ-Chunk due to JSQ-Chunk's  superior load balancing.  Furthermore, we see that the performance of JSQ-Chunk is much closer to that of EQUI than Random-Chunk.  While the response time of Random-Chunk does not depend on the total number of cores, it is not clear how JSQ-Chunk will behave as the number of cores becomes large.  We will now see that mean response time under JSQ-Chunk approaches that of EQUI as the system scales.

\begin{figure}[h!]
\centering
\includegraphics[width=.5\textwidth]{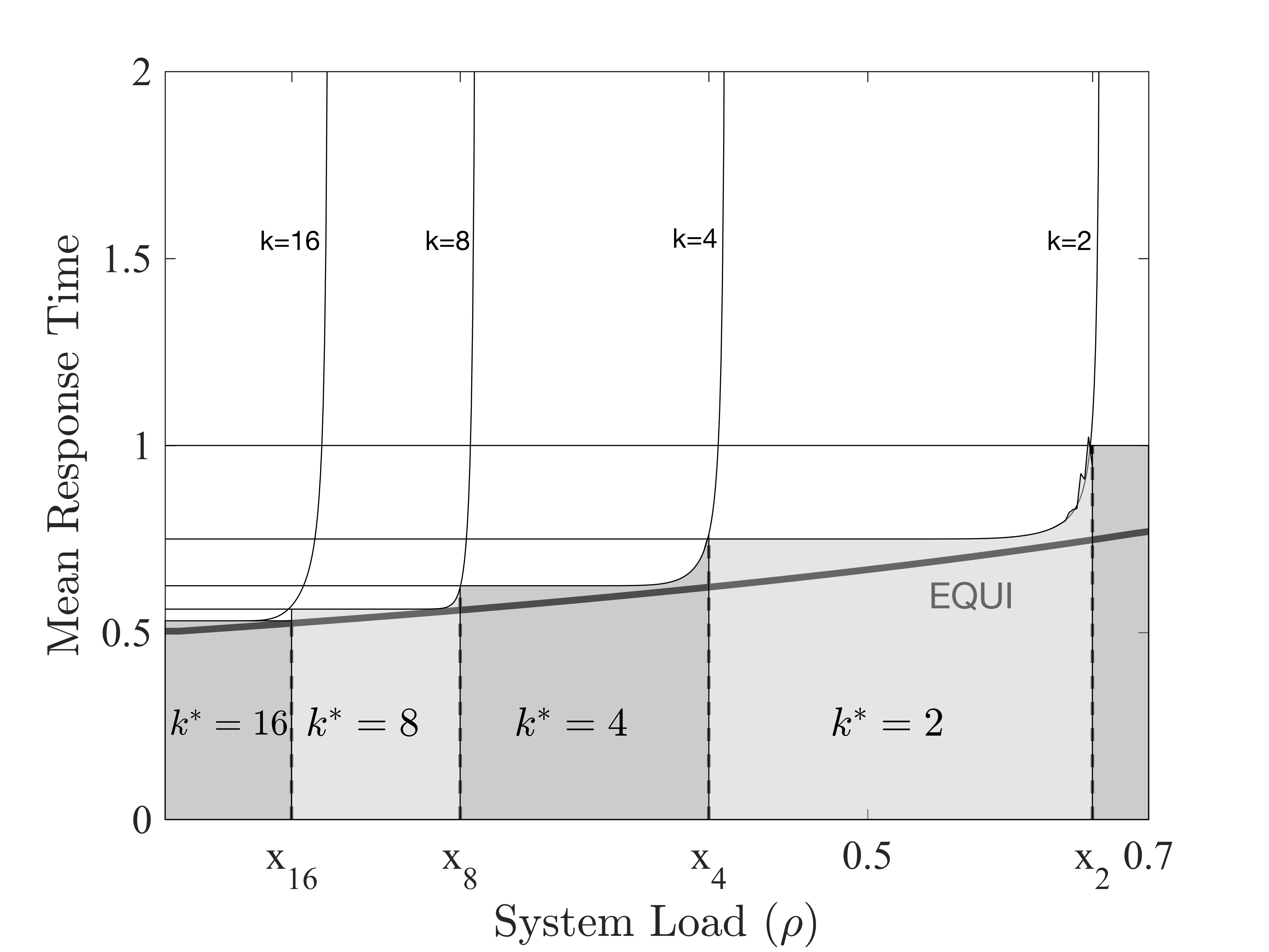}
\caption{Analysis of mean response time under JSQ-Chunk dispatching and EQUI with a large number of cores ($n=512$).  We assume a speedup curve of Amdahl's law with a parameter of $p=0.5$ mean job size $\mathbb{E}[X]=1$.  Increasing the number of cores narrows the gap between JSQ-Chunk and EQUI, especially at the labeled critical load points.}
\label{figClose}
\end{figure}

\subsection{Why JSQ-Chunk is Close to EQUI}\label{jsq:near:opt}

As we saw in Figure \ref{figJSQ}, the mean response time under JSQ-Chunk (with optimal $k^*$) is close to that under EQUI.  Figure \ref{figClose} shows that, as we increase the number of cores to $n=512$, JSQ-Chunk becomes even closer to EQUI.  This is surprising because JSQ-Chunk uses a fixed level of parallelization, while EQUI continuously changes its level of parallelization based on the system state.  This phenomenon can be viewed as  EQUI choosing an \emph{effective} level of parallelization of $k^*$.  Figure \ref{figClose} also illustrates the \emph{critical load points}, $x_{k^*}$ for values of $k^*$, at which JSQ-Chunk is indifferent between two choices of $k^*$.  As $n$ increases, these critical load points converge to the instability points for the corresponding choices of $k^*$, as the curves become more L-shaped.  In looking at Figure \ref{figClose}, we see that it is just below these critical load points that JSQ-Chunk's performance is closest to the performance of EQUI.  The rest of this section is devoted to formally stating and proving this observation (see Theorem \ref{thm:jsq:opt} below).

We first note that, given a fixed value of $\lambda$, a fixed mean job size $\mathbb{E}[X]$, and a speedup function $s$, the choice of the optimal level of parallelization under JSQ-Chunk, $k^*$, depends only on $n$, the number of cores in the system.  Let $k^*(n)$ denote this optimal level of parallelization given some values of these other system parameters.  In particular, note that $\Lambda = \lambda n$ will increase with $n$, and thus the system load, $\rho$, remains fixed for all values of $n$.  When $n$ is small, changes in $n$ will have a significant impact on the values of $k^*(n)$, but for sufficiently large values of $n$, $k^*(n)$ will become constant in $n$.  This is summarized in the following lemma regarding the performance of JSQ-Chunk.

\begin{lemma}\label{lemma:jsq}
Given a fixed value of $\lambda$, a fixed value of $\mu = \frac{1}{\mathbb{E}[X]}$, and some speedup function $s$, let $k^*(n)$ denote the optimal level of parallelization under JSQ-Chunk in a system of size $n$.  Let $\mathbb{E}[T]^{\mbox{JSQ-Chunk}}$ be the mean response time under JSQ-Chunk with level of parallelization $k^*(n)$.  There exists some constant $k^*$, not dependent on $n$, such that 
$$\lim_{n\rightarrow \infty} k^*(n) = k^*$$
and
$$\lim_{n\rightarrow \infty} \mathbb{E}[T]^{\mbox{JSQ-Chunk}} = \frac{1}{s(k^*)\mu}.$$
\end{lemma}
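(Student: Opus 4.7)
The strategy is to apply the JSQ-Chunk response-time approximation \eqref{eq-np-approx} for each fixed chunk size $k$, let the number of chunks $c = n/k$ grow while the per-chunk load $\rho_k := k\rho/s(k)$ stays constant, show that the $M/M/c$-waiting component vanishes, and then identify $k^*$ as the minimizer of the residual service-time term subject to stability.

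The first step is to evaluate, for any fixed integer $k$ with $s(k) > k\rho$, the quantity
\begin{equation*}
\mathbb{E}[T]^{\mbox{JSQ-Chunk}} = W_{M/M/c}(\rho_k)\,S(\rho_k)\,R(\rho_k) + \mathbb{E}[X_k]
\end{equation*}
as $n \to \infty$ along multiples of $k$. The correction factors $S(\rho_k)$ and $R(\rho_k)$ depend only on $\rho_k$, not on $c$, and hence stay bounded. The Erlang-C mean waiting time $W_{M/M/c}(\rho_k)$, by contrast, tends to zero whenever $\rho_k < 1$, because the Erlang-C delay probability $C(c, c\rho_k)$ decays to zero (in fact exponentially fast) as $c \to \infty$ at a fixed sub-unit load. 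So $\mathbb{E}[T]^{\mbox{JSQ-Chunk}} \to \mathbb{E}[X_k] = 1/(s(k)\mu)$ for every feasible $k$.

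The second step is to pick $k^*$. Since $s$ is non-decreasing, $1/(s(k)\mu)$ is non-increasing in $k$, so we want $k$ as large as possible, subject to stability $s(k)/k > \rho$. Because $s$ is bounded above by hypothesis, $s(k)/k \to 0$, hence only finitely many positive integers $k$ satisfy $s(k)/k > \rho$; take $k^*$ to be the largest of them. The asymptotic response time at chunk size $k^*$ is $1/(s(k^*)\mu)$; every feasible $k < k^*$ yields a strictly larger asymptote (since $s$ is non-decreasing and, by concavity with $s(0)=0$, strictly so before $s$ saturates), while any $k > k^*$ is infeasible. Consequently, for all sufficiently large $n$ divisible by $k^*$, the JSQ-Chunk optimizer equals $k^*(n) = k^*$, and both stated limits follow.

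The main obstacle is justifying that $W_{M/M/c}(\rho_k) \to 0$ as $c \to \infty$: this is not merely a continuity-of-the-formula argument but requires the classical tail estimate that the Erlang-C delay probability at fixed sub-critical load vanishes with the number of servers. A secondary, milder issue is the divisibility constraint that $k$ must divide $n$, which is handled by restricting attention to the subsequence of $n$ that are multiples of the candidate $k^*$, along which the limit in the statement is to be interpreted.
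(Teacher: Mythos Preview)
Your overall strategy matches the paper's: show that for each fixed stable $k$ the queueing component vanishes as $c=n/k\to\infty$, leaving only the service term $1/(s(k)\mu)$, and then take $k^*$ to be the largest stable $k$. But there are two genuine gaps in your execution.

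First, the claim that ``$S(\rho_k)$ and $R(\rho_k)$ depend only on $\rho_k$, not on $c$'' is incorrect. A glance at Appendix~\ref{app-np} shows that $a(\rho)$, $b(\rho)$, $\xi(\rho)$, $S(\rho)$, $r_c$, $i_c$, and $R(\rho)$ all contain explicit dependence on $c$; in particular $S(\rho)$ carries a leading factor of $c$ and hence grows linearly in $c$, not merely stays bounded. Your argument can be rescued, since $W_{M/M/c}(\rho_k)$ decays exponentially in $c$ at fixed subcritical load, which dominates any polynomial growth in $S$ and $R$, but that is the actual work, and it is not what you wrote.

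Second, and more fundamentally, the Nelson--Philips formula \eqref{eq-np-approx} is an \emph{approximation}, not an identity. Establishing that the approximation tends to $1/(s(k)\mu)$ does not, by itself, establish that the true mean response time does. The paper avoids this issue entirely: rather than passing through the approximation, it invokes directly (citing \cite{harchol2013performance}) the fact that under JSQ with fixed load $\rho_k<1$ the probability of queueing vanishes as the number of servers grows, so the mean response time converges to the mean service time $1/(s(k)\mu)$. That is both shorter and rigorous. From there the identification of $k^*$ as the largest $k$ with $s(k)>k\rho$ proceeds exactly as you outline.
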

\begin{proof}
We can observe that, in this case, the system load $\rho = \frac{\lambda}{\mu}$ is constant (does not depend on $n$).  We know from \cite{harchol2013performance} that for a fixed system load $\rho$, the probability of queuing under JSQ (and hence also JSQ-Chunk) vanishes as the number of cores becomes large.  Thus, the mean response time under JSQ-Chunk with any level of parallelization, $k$, such that the system is stable, will converge to $\frac{1}{s(k)\mu}$ where $\mu = \frac{1}{\mathbb{E}[X]}$.  Since load is fixed, there exists some level of parallelization, $k^*$, which is the highest value of $k$ for which the JSQ-Chunk system is stable.  We know that the mean response time under JSQ-Chunk with level of parallelization $k^*$ converges to $\frac{1}{s(k^*)\mu}$.  For any $k>k^*$, the system is unstable.  For any $k<k^*$, the mean response time under JSQ-Chunk with level of parallelization $k$ converges to $\frac{1}{s(k)\mu} \geq \frac{1}{s(k^*)\mu}$ since the speedup function is non-decreasing.  Thus,
$$\lim_{n\rightarrow \infty} \mathbb{E}[T]^{\mbox{JSQ-Chunk}} = \frac{1}{s(k^*)\mu}$$
as desired.
\end{proof}

We will refer to $\lim_{n\rightarrow \infty} k^*(n)$ as $k^*$ for the remainder of the section.

We now want to show that as $n \rightarrow \infty$, $\mathbb{E}[T]^{\mbox{EQUI}} \rightarrow \frac{1}{s(k^*)\mu}$ also.  That is, \emph{EQUI is behaving as if it was using a fixed level of parallelization of $k^*$}.

To relate the performance of EQUI to $k^*$, it will be helpful to examine the behavior of EQUI at the critical load points.  Recall that we can define $$\Lambda_{k^*} =  \frac{\Lambda}{c}= \frac{\lambda n}{c},$$
where $c=\frac{n}{k^*}$ is the number of chunks and 
$$\rho_{k^*} = \Lambda_{k^*}\mathbb{E}[X_{k^*}]$$
is the load observed by a chunk of size $k^*$.  We can see from Lemma \ref{lemma:jsq} that the critical load points move towards the instability points as $n$ increases.  Thus, we define a critical load point under a level of parallelization of $k^*$ to be the point where $\rho_{k^*}=1$.  Observe that 
$$\rho_{k^*} =1 \Longleftrightarrow \Lambda = c s(k^*) \mu.$$
We now evaluate EQUI at critical load points where $\Lambda= c s(k^*) \mu$.  Note that when $k^*=1$, the mean response time under both EQUI and JSQ-Chunk will tend towards infinity.  Thus, we only consider cases where $k^* > 1$.

\begin{theorem}\label{thm:jsq:opt}
Given any fixed $k^*>1$, let $\mathbb{E}[T]^{\mbox{EQUI}}_{\rho_{k^*}=1}$ be the mean response time under EQUI when $\rho_{k^*} = 1$.  Then,
\begin{equation}\lim_{n\rightarrow \infty} \mathbb{E}[T]^{\mbox{EQUI}}_{\rho_{k^*}=1} =\frac{1}{s(k^*) \mu}=\lim_{\rho_{k^*}\rightarrow 1^-}\lim_{n\rightarrow \infty} \mathbb{E}[T]^{\mbox{JSQ-Chunk}}.\label{eq:jsq:opt}\end{equation}
\end{theorem}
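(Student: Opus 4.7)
The rightmost equality in \eqref{eq:jsq:opt} is immediate from Lemma~\ref{lemma:jsq}, whose inner limit equals $\frac{1}{s(k^*)\mu}$ for every $\rho_{k^*}<1$ and is therefore unaffected by the outer limit. It remains to show $\lim_{n\to\infty}\mathbb{E}[T]^{\mbox{EQUI}}_{\rho_{k^*}=1} = \frac{1}{s(k^*)\mu}$. The plan is to analyse EQUI's birth-death chain (Figure~\ref{equiChain}) at the critical load and show that the stationary number of jobs, $N$, concentrates at $\ell^* := n/k^*$; Little's law then closes the argument.

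At $\rho_{k^*}=1$, arrivals occur at rate $\Lambda = \ell^* s(k^*)\mu$ and the state-$\ell$ departure rate is $f(\ell)\mu$, with $f(\ell) = \ell\, s(n/\ell)$ for $\ell \le n$ and $f(\ell) = n$ for $\ell \ge n$. Lemma~\ref{claim-increasing} says $f$ is strictly increasing on $\{1,\ldots,n\}$, and $f(\ell^*) = \Lambda/\mu$, so the stationary distribution is unimodal with mode $\ell^*$; stability is immediate since $\Lambda/(n\mu) = s(k^*)/k^* < 1$ by sublinearity. The crucial quantitative input is a local expansion of $f$ at $\ell^*$: using $n/(\ell^* + t) = k^*/(1 + t/\ell^*)$ together with concavity of $s$, one obtains, for $t = o(\ell^*)$,
\begin{equation*}
f(\ell^* + t) - \Lambda/\mu \;=\; \alpha\, t + O\!\left(t^2/\ell^*\right), \qquad \alpha := s(k^*) - k^* s'(k^*),
\end{equation*}
where $s'(k^*)$ is replaced by a one-sided subgradient if $s$ is not differentiable at $k^*$. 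The coefficient $\alpha$ is $n$-independent and strictly positive: $\alpha \ge 0$ follows from concavity applied between $0$ and $k^*$, and $\alpha = 0$ would force $s$ to be linear through the origin on a neighbourhood of $k^*$, making $f$ locally constant and contradicting Lemma~\ref{claim-increasing}.

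Consequently the restoring drift $\Lambda - f(\ell^* + t)\mu \approx -\alpha\mu\, t$ towards $\ell^*$ is linear in $t = N - \ell^*$ with an $n$-independent rate, while the instantaneous variance of jumps near $\ell^*$ is $\Lambda + f(\ell^*)\mu = O(n)$. A standard Lyapunov argument for birth-death chains---equivalently, examination of the product-form stationary distribution $\pi_\ell \propto \prod_{j=1}^\ell (\Lambda/\mu)/f(j)$, which exhibits geometric decay once $|\ell-\ell^*|$ exceeds an $n$-independent threshold determined by $\alpha$---yields $\operatorname{Var}(N) = O(n)$ and hence $\mathbb{E}[\,|N - \ell^*|\,] = O(\sqrt{n})$. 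Therefore $\mathbb{E}[N] = n/k^* + O(\sqrt n)$, and Little's law gives
\begin{equation*}
\mathbb{E}[T]^{\mbox{EQUI}}_{\rho_{k^*}=1} \;=\; \frac{\mathbb{E}[N]}{\Lambda} \;=\; \frac{1}{s(k^*)\mu} + O\!\left(\tfrac{1}{\sqrt n}\right) \;\longrightarrow\; \frac{1}{s(k^*)\mu},
\end{equation*}
as required. The main obstacle is making the concentration step rigorous under only the stated concavity and sublinearity of $s$; the cleanest route is to verify the strict positivity of $\alpha$ (handled via Lemma~\ref{claim-increasing}) and then push the ratio test $\pi_{\ell+1}/\pi_\ell = \Lambda/(\mu f(\ell+1))$ to obtain geometric tails on both sides of $\ell^*$ with an $n$-independent decay rate beyond an $O(\sqrt n)$ window around the mode.
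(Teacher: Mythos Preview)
Your route is genuinely different from the paper's. The paper does not analyse concentration of $N$ around $\ell^*=n/k^*$; instead it sandwiches EQUI's chain between two explicit two-rate \emph{threshold chains}. For the upper bound it takes threshold $t=\lceil c(1+\epsilon)\rceil$ (with $c=n/k^*$), $\mu_{\mathrm{low}}=s(n)\mu$, $\mu_{\mathrm{high}}=\lceil c(1+\epsilon)\rceil\, s\!\bigl(n/\lceil c(1+\epsilon)\rceil\bigr)\mu$; for the lower bound, $t=\lfloor c(1-\epsilon)\rfloor$, $\mu_{\mathrm{low}}=\lfloor c(1-\epsilon)\rfloor\, s\!\bigl(n/\lfloor c(1-\epsilon)\rfloor\bigr)\mu$, $\mu_{\mathrm{high}}=n\mu$. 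It then substitutes into a closed-form formula for $\mathbb{E}[T]$ in such a threshold chain and shows both bounds tend to $(1\pm\epsilon)/(s(k^*)\mu)$. This needs only the monotonicity of $\ell\mapsto \ell\,s(n/\ell)$ from Lemma~\ref{claim-increasing} --- no local expansion, no $\alpha$, no concentration machinery --- and is entirely elementary. Your drift argument is more conceptual and, if carried through, yields the sharper rate $\mathbb{E}[T]=\tfrac{1}{s(k^*)\mu}+O(n^{-1/2})$, which the paper's $\epsilon$-sandwich does not give.

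One technical point in your sketch needs correction. At distance $t$ from $\ell^*$ the ratio is $\pi_{\ell+1}/\pi_\ell = f(\ell^*)/f(\ell^*+t+1) = 1-\alpha t/f(\ell^*)+O(t^2/n^2)=1-O(t/n)$, so at $t\asymp\sqrt{n}$ the per-step ratio is $1-O(n^{-1/2})$, \emph{not} bounded away from $1$ uniformly in $n$. An $n$-independent step-wise geometric rate only appears once $t\asymp n$ (i.e.\ beyond $\ell^*(1\pm\epsilon)$ --- precisely where the paper places its thresholds). The variance bound $\operatorname{Var}(N)=O(n)$ is still correct, but the mechanism is Gaussian rather than geometric: summing $\log(\pi_{\ell+1}/\pi_\ell)\approx -\alpha t/f(\ell^*)$ gives $\log\pi_{\ell^*+t}-\log\pi_{\ell^*}\approx -\alpha t^2/(2f(\ell^*))$, a Gaussian profile with variance $f(\ell^*)/\alpha=O(n)$. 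A Laplace estimate on the product form, or the quadratic Lyapunov function $V(\ell)=(\ell-\ell^*)^2$, makes this rigorous; the ratio test as you phrase it does not.
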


\begin{proof}
In this proof we limit our discussion to exponentially distributed job sizes.  However, because EQUI and JSQ-Chunk are insensitive to the job size distribution (see Section \ref{sec:equi:insen} and Section \ref{jsq}), our proof generalizes to the case where jobs are generally distributed.  We can see that the right hand side of this claim follows directly from Lemma \ref{lemma:jsq}, and we thus proceed to analyzing the mean response time under EQUI.

It will be helpful to start by examining a \emph{threshold chain} as shown in Figure \ref{thresholdChain}.  Observe that the service rate below the threshold state is $\mu_{low}$ and the service rate above the threshold state is $\mu_{high}$.  We define
$$\rho_{low} := \frac{\Lambda}{\mu_{low}}, \qquad \rho_{high} := \frac{\Lambda}{\mu_{high}}.$$

\begin{figure}[b!]
\centering
\includegraphics[width=.75\textwidth]{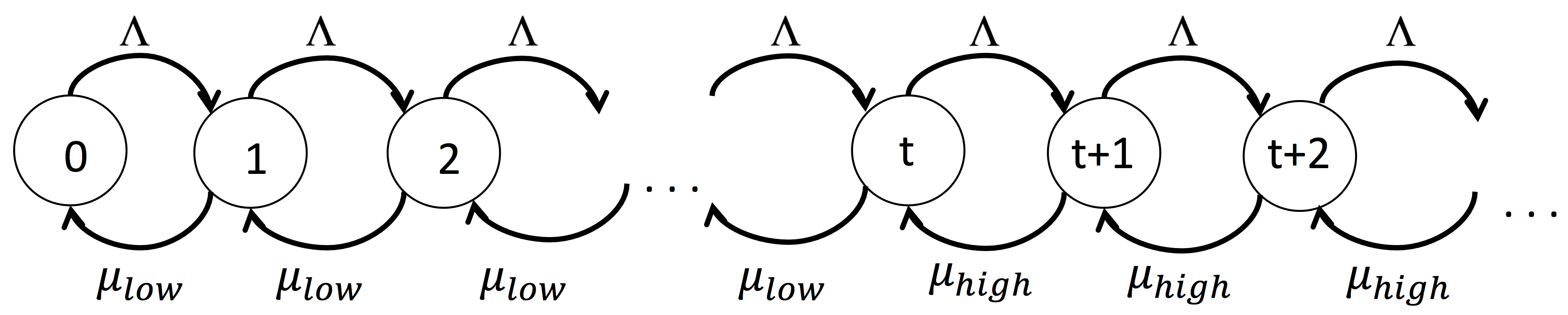}
\caption{A threshold chain with threshold state $t$ and arrival rate $\Lambda$.  For all states $i \leq t$, the service rate is $\mu_{low}$ and for all states $i > t$, the service rate is $\mu_{high}$.}
\label{thresholdChain}
\end{figure}

We can solve for the mean response time, $T$, in such a threshold chain, and find that 
\begin{align}\mathbb{E}[T]^{\mbox{thresh}}= \frac{1}{\Lambda}\cdot \biggl( t + \frac{\rho_{low}}{1-\rho_{low}} + \frac{1}{1-\rho_{high}}\label{eq-en-thresh}
+ \frac{1 +t - t\rho_{high}}{\rho_{high} -1+ \rho_{low}^t(\rho_{low}-\rho_{high})} \biggr).\end{align}

\subsubsection*{Constructing an Upper Bound}
We now construct a threshold chain which gives an upper bound (UB) on the mean response time under EQUI with arrival rate $\Lambda=cs(k^*) \mu$ (and thus $\rho_{k^*}=1$) and service rate $\mu$ per core.  To do this, we set $t=\lceil c(1+\epsilon) \rceil$, $\mu_{low}=s(n)\mu$ and $\mu_{high}=\lceil c(1+\epsilon) \rceil s\left(\frac{n}{\lceil c(1+\epsilon) \rceil}\right)\mu$ for any $\epsilon>0$.  Note that all departure rates in the UB chain are lower than those of EQUI (see Figure \ref{equiChain}), and thus this chain provides an upper bound on the mean response time under EQUI.  Our UB chain has:
$$\rho_{low} = \frac{\Lambda}{s(n) \mu} = \frac{c s(k^*) \mu}{s(n) \mu}$$
and 
$$\rho_{high} = \frac{\Lambda}{\lceil c(1+\epsilon) \rceil s\left(\frac{n}{\lceil c(1+\epsilon) \rceil}\right)\mu} =  \frac{c s(k^*) \mu}{\lceil c(1+\epsilon) \rceil s\left(\frac{n}{\lceil c(1+\epsilon) \rceil}\right)\mu}.$$
First note that by Lemma \ref{claim-increasing}, 
$$s(n) < cs(k^*) < \lceil c(1+\epsilon) \rceil s\left(\frac{n}{\lceil c(1+\epsilon) \rceil}\right).$$
Thus $\rho_{high} < 1 <\rho_{low}$.  We now apply \eqref{eq-en-thresh} to see that
\begin{align*}\lim_{n \rightarrow \infty} \mathbb{E}[T]^{\mbox{UB}} =  \lim_{n \rightarrow \infty} \frac{1}{\Lambda}\cdot\biggl(\lceil c(1+\epsilon) \rceil &+ \frac{\rho_{low}}{1-\rho_{low}} + \frac{1}{1-\rho_{high}}\\
 &+ \frac{1+\lceil c(1+\epsilon) \rceil(1-\rho_{high})}{\rho_{high} -1+ \rho_{low}^{\lceil c(1+\epsilon) \rceil}(\rho_{low}-\rho_{high})}\biggr).
\end{align*}
We can see that 
$$\lim_{n\rightarrow \infty} \frac{1}{\Lambda}\cdot \frac{\rho_{low}}{1-\rho_{low}}= \lim_{n\rightarrow \infty}   \frac{1}{\Lambda}\cdot\frac{1/\mu_{low}}{1/\Lambda - 1/\mu_{low}}= 0,$$
since $s(n)$ is bounded and $\mu_{low}$ therefore converges to a positive constant.
Furthermore, since $\rho_{high}$ converges to a constant less than 1, 
$$\lim_{n\rightarrow \infty}  \frac{1}{\Lambda}\cdot \frac{1}{1-\rho_{high}} = 0.$$
Finally, we see that
$$\lim_{n\rightarrow \infty} \frac{1+\lceil c(1+\epsilon) \rceil(1-\rho_{high})}{\Lambda\left(\rho_{high} -1+ \rho_{low}^{\lceil c(1+\epsilon) \rceil}(\rho_{low}-\rho_{high})\right)} =0,$$
since the numerator grows linearly in $n$ and the denominator grows exponentially in $n$.  Thus,
\begin{align*}
\lim_{n\rightarrow \infty} \mathbb{E}[T]^{\mbox{UB}}&= \lim_{n\rightarrow \infty} \frac{1}{\Lambda}\cdot\lceil c(1+\epsilon) \rceil\\
&=  \lim_{n\rightarrow \infty} \frac{1}{\Lambda}\cdot (c(1+\epsilon) + o(n))\\
&= (1+\epsilon)\frac{1}{s(k^*) \mu}.
\end{align*}

\subsubsection*{Constructing a Lower Bound}
Our argument for constructing a lower bound (LB) is largely the same.  We again assume $\Lambda=cs(k^*) \mu$  (and thus $\rho_{k^*}=1$) and we set $t=\lfloor c(1-\epsilon) \rfloor$, $\mu_{low}=\lfloor c(1-\epsilon) \rfloor s\left(\frac{n}{\lfloor c(1-\epsilon) \rfloor}\right)$, and $\mu_{high}=n\mu$.  Note that all departure rates in the LB chain are higher than those of EQUI, and thus this chain provides a lower bound on the mean response time under EQUI.  We now have
$$\rho_{low} = \frac{\Lambda}{\lfloor c(1-\epsilon) \rfloor s\left(\frac{n}{\lfloor c(1-\epsilon) \rfloor}\right)\mu} =  \frac{c s(k^*) \mu}{\lfloor c(1-\epsilon) \rfloor s\left(\frac{n}{\lfloor c(1-\epsilon) \rfloor}\right)\mu}$$
and
$$\rho_{high} = \frac{\Lambda}{n \mu} = \frac{c s(k^*) \mu}{n \mu}=\frac{s(k^*) \mu}{k^* \mu}.$$
By Lemma \ref{claim-increasing} we again see that $\rho_{high} < 1 < \rho_{low}$.  We apply \eqref{eq-en-thresh} to see that
\begin{align*}\lim_{n \rightarrow \infty} \mathbb{E}[T]^{\mbox{LB}} = \frac{1}{\Lambda}\cdot\biggl(\lim_{n \rightarrow \infty} \lfloor c(1-\epsilon) \rfloor  &+ \frac{\rho_{low}}{1-\rho_{low}} + \frac{1}{1-\rho_{high}}\\
 &+ \frac{1+\lfloor c(1-\epsilon) \rfloor(1+\rho_{high})}{\rho_{high} -1+ \rho_{low}^{\lfloor c(1-\epsilon) \rfloor }(\rho_{low}-\rho_{high})}\biggr).
\end{align*}
Since $\rho_{low}$ converges to a constant greater than 1 and $\rho_{high}$ is a constant,
$$\lim_{n\rightarrow \infty} \frac{1}{\Lambda}\cdot\frac{\rho_{low}}{1-\rho_{low}} = 0 \mbox{ and}\lim_{n\rightarrow \infty} \frac{1}{\Lambda}\cdot\frac{1}{1-\rho_{high}} = 0.$$
Finally, we see that
$$\lim_{n\rightarrow \infty} \frac{1+\lfloor c(1-\epsilon) \rfloor(1+\rho_{high})}{\Lambda\left(\rho_{high} -1+ \rho_{low}^{\lfloor c(1-\epsilon) \rfloor }(\rho_{low}-\rho_{high})\right)} = 0,$$
since the numerator grows linearly in $n$ and the denominator grows exponentially in $n$.  Thus,
\begin{align*}
\lim_{n\rightarrow \infty} \mathbb{E}[T]^{\mbox{UB}}&= \lim_{n\rightarrow \infty} \frac{1}{\Lambda}\cdot\lfloor c(1-\epsilon) \rfloor\\
&=  \lim_{n\rightarrow \infty} \frac{1}{\Lambda}\cdot (c(1-\epsilon) - o(n))\\
&= (1-\epsilon)\frac{1}{s(k^*) \mu}.
\end{align*}

We have therefore shown that
$$(1-\epsilon)\frac{1}{s(k^*) \mu} \leq \lim_{n\rightarrow \infty} \mathbb{E}[T]^{\mbox{EQUI}}_{\rho_{k^*}=1} \leq (1+\epsilon)\frac{1}{s(k^*) \mu}$$
for any $0 < \epsilon \leq 1$ and we have thus shown \eqref{eq:jsq:opt} as desired.
\end{proof}

\section{Multiple Speedup Functions}\label{sec-multiple}

\begin{figure*}[h!t]
\centering
\subfigure[EQUI vs. OPT]{
\includegraphics[width=.45\textwidth]{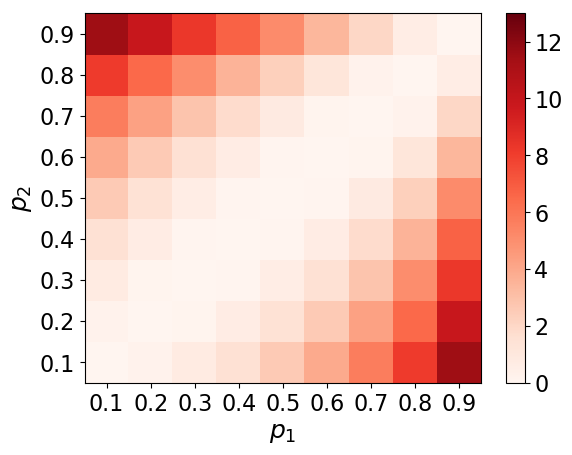}
\label{fig:heat:equi}
}
\subfigure[\gstar{} vs. OPT]{
\includegraphics[width=.45\textwidth]{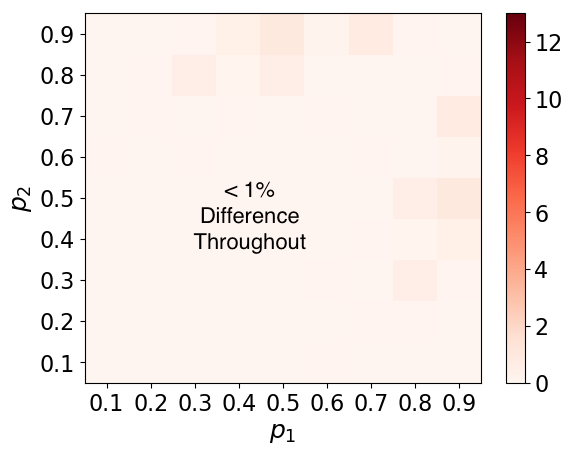}
\label{fig:heat:greedy}
}
\vspace{-0.25in}
\caption{Heat maps showing the percentage difference in the mean response time in the system, $\mathbb{E}[T]$, between (a) EQUI and OPT and between (b) \gstar{} and OPT, in the case of two speedup functions where $s_1$ and $s_2$ are Amdahl's law with parameters $p_1$ and $p_2$ respectively.  Here $\mathbb{E}[X_1]=\mathbb{E}[X_2]=\frac{1}{2}$ and $\Lambda_1=\Lambda_2=5$.  The axes represent different values of $p$ for each class.  \gstar{}, EQUI, and OPT were evaluated numerically using the MDP formulation given in Section \ref{mdp-opt}.  These heat maps look similar under various values of $\Lambda_1$ and $\Lambda_2$.} 
\label{fig:heat}
\vspace{-0.15in}
\end{figure*}

Thus far, we have assumed that jobs are homogeneous with respect to speedup.  In this section, we consider the case where jobs may have different speedup functions.  To facilitate dealing with multiple speedup functions, we will assume throughout this section that job sizes are exponentially distributed (see Section \ref{sec:opt:general} for a discussion of why general job size distributions are outside the scope of this paper when dealing with optimality).

We will see in Section \ref{equi-not-opt} that, in the case of multiple speedup functions, EQUI is no longer the optimal policy.  In Section \ref{greedy}, we propose a class of policies called GREEDY which maximize the departure rate in every state.  We then describe the optimal GREEDY policy, \gstar{} in Section \ref{gstar}.  While \gstar{} is not optimal in general (see Section \ref{greedy-near-opt}), we show that \gstar{} performs near-optimally in a wide range of settings (Sections \ref{mdp-opt} and \ref{greedy-near-opt}).  Finally in Section \ref{fixed-width-multi}, we return to fixed-width policies and explain why they are insufficient when there are multiple speedup functions.

\subsection{Why Multiple Speedup Functions}
There are situations in which it would be reasonable to expect all jobs to follow a single speedup function, such as when all jobs are instances of a single application.  The PARSEC-3 benchmark provides many examples of workloads for which this is the case \cite{zhan2017parsec3}.  In practice, however, it may be the case that there are 2 or more classes of jobs, each with a unique speedup function reflecting the amount of sequential work, number of IO operations, and communication overhead that its jobs will experience when run across multiple cores.

We consider the case where jobs may belong to one of two \emph{classes}, each of which has its own speedup function, $s_i$ (the classes may also have arrival rates, $\Lambda_i$, but we assume all job sizes to be exponentially distributed with rate $\mu= 1/\mathbb{E}[X]$).  Without loss of generality, we assume that class 1 jobs are less parallelizable than class 2 jobs: $s_1(k) < s_2(k)$ for $k > 1$.  For example, class 1 jobs could follow Amdahl's law with $p=.5$ while class 2 jobs follow Amdahl's law with $p=.75$ (see Figure \ref{figAmdahl}).  As usual, our goal is to describe and analyze scheduling policies which minimize overall mean response time across all jobs.  We will assume that a scheduling policy can differentiate between job classes when making scheduling decisions.
\vspace{-.1in}
\subsection{EQUI is No Longer Optimal}\label{equi-not-opt}
We have already seen that EQUI is optimal when jobs are homogeneous with respect to speedup. One might assume that, since EQUI bases its decisions on the number of jobs in the system rather than the jobs' speedup functions, EQUI could continue to perform well when there are multiple speedup functions.  However, it turns out that EQUI's performance is suboptimal even when there are just two speedup functions (see Figure \ref{fig:heat:equi}).  While EQUI's performance is actually close to optimal in the cases where $s_1$ and $s_2$ are similar, we see that EQUI's performance relative to the optimal policy becomes worse as the difference between the speedup functions increases.

To see why EQUI is suboptimal in this case, recall that EQUI's optimality stems from the fact that it maximizes the rate of departures in every state when jobs follow a single speedup function (see proof of Theorem \ref{equi-thm-global-opt}).  When jobs are permitted to have different speedup functions, maximizing the rate of departures will require allocating more cores to class 2 jobs and fewer cores to class 1 jobs.
\subsection{A GREEDY Class of Policies}\label{greedy}
We have seen that EQUI fails to maximize the rate of departures when there are multiple speedup functions.  Would a policy that maximizes the total rate of departures be optimal in this case?  We define the GREEDY class of policies to be the policies which achieve the maximal total rate of departures in every state. 

To describe the policies in GREEDY, we again consider the case where jobs belong to one of two job classes.  For any state $(x_1,x_2)$ where there are $x_1$ class 1 jobs and $x_2$ class 2 jobs, attaining the maximal rate of departures can be thought of as a two step process.  \emph{First}, a policy must decide how many cores, $a_1$, to allocate to the $x_1$ class 1 jobs.  The remaining $a_2=n-a_1$ cores will be allocated to class 2 jobs.  \emph{Second}, the policy must decide how to divide the $a_1$ cores among the class 1 jobs and the $a_2$ cores among the class 2 jobs.  We have seen that, when dividing cores among a set of jobs with a single speedup function, EQUI maximizes the total rate of departures of this set of jobs.  For a given choice of $a_1$, the $a_1$ cores should thus be evenly divided among the class 1 jobs and the $a_2$ cores should be evenly divided among the class 2 jobs in order to maximize the total rate of departures.  Thus, only the first decision remains.

To find an allocation of cores which maximizes the rate of departures, we first define $\beta(x_1,x_2)$, the maximum rate of departures from the state $(x_1, x_2)$:
\begin{equation}\label{eq:beta}
\beta(x_1, x_2) = \max_{\alpha \in [0,n]} x_1s_1\left(\frac{\alpha}{x_1}\right)\mu+x_2s_2\left(\frac{n-\alpha}{x_2}\right)\mu.
\end{equation}
We must then choose $a_1$ such that
\begin{equation}\label{eq:greedy}
a_1 \in \left\{\alpha : x_1s_1\left(\frac{\alpha}{x_1}\right)\mu + x_2 s_2\left(\frac{n-\alpha}{x_2}\right)\mu=\beta(x_1, x_2)\right\}.
\end{equation}
This same two-step process will generalize to the case when jobs follow more than 2 speedup functions.

Crucially, note that GREEDY is truly a \emph{class} of policies, since, in a given state, there may be multiple choices of $a_1$ which satisfy \eqref{eq:greedy}.  That is, there could be multiple allocations which achieve the maximal total rate of departures.  For example, consider a system with 4 cores where both jobs classes have the same service rate $\mu=\frac{1}{\mathbb{E}[X]}$.  If there are 4 class 1 jobs and 4 class 2 jobs, any choice of $a_1\in[0,4]$ results in the maximal rate of departures, $n\mu$.  This begs the question of \emph{which} policy from the GREEDY class achieves the best performance.

\subsection{The Best GREEDY Policy: \gstar{}}\label{gstar}
We now define \gstar{}, a policy which dominates all other GREEDY policies with respect to mean response time.  Consider two GREEDY policies, $P_1$ and $P_2$.  In any state $(x_1, x_2)$, both policies achieve the same maximal rate of departures.  However, $P_1$ might achieve this rate by having a higher departure rate for class 1 jobs and a lower departure rate for class 2 jobs as compared to $P_2$.  If class 1 jobs are less parallelizable than class 2 jobs, we say that $P_1$ ``defers parallelizable work'' in this state, which is a strategy that could benefit $P_1$ in the future.

The \gstar{} policy is the GREEDY policy which in all states opts to defer parallelizable work when possible.
Specifically, in the case of two job classes:  \gstar{} allocates $a_1^*$ cores to class 1 jobs (the less parallelizable class), where $a_1^*$ is the maximum value of $a_1$ satisfying \eqref{eq:greedy}.  That is,
$$a_1^*= \max{\left\{\alpha:x_1s_1\left(\frac{\alpha}{x_1}\right)\mu+x_2s_2\left(\frac{n-\alpha}{x_2}\right)\mu=\beta(x_1, x_2)\right\}}.$$

In other words, $a_1^*$ allows \gstar{} to attain the maximal rate of departures while \emph{also} maximizing the rate at which class 1 jobs are completed.  Theorem \ref{thm-gstar} shows that \gstar{} dominates all other GREEDY policies.
\begin{theorem}\label{thm-gstar}
For any GREEDY policy, $P$, $$\mathbb{E}[T]^{\gstars} \leq \mathbb{E}[T]^P.$$
\end{theorem}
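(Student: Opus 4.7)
The plan is to cast the problem of optimizing over the GREEDY class as a Markov decision process and argue via value iteration. By Little's Law, minimizing $\mathbb{E}[T]$ across GREEDY policies is equivalent to minimizing the time-average number of jobs in the system. Since every GREEDY policy produces exactly the same total departure rate $\beta(x_1,x_2)$ in state $(x_1,x_2)$ by definition \eqref{eq:beta}, the only decision in each state is how to split this rate between the two job classes. Because $s_1$ and $s_2$ are bounded (by sublinearity), I would uniformize the continuous-time chain at a common rate and reduce to a discrete-time MDP with per-period holding cost $x_1+x_2$ and a single scalar action $a_1\in[0,n]$ realizing the maximum in \eqref{eq:greedy}; let $V^*$ denote the optimal value function over the GREEDY class.

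The structural property that makes $\gstars$ optimal is
\begin{equation*}
V^*(x_1,\,x_2-1) \;\geq\; V^*(x_1-1,\,x_2) \qquad \text{for all } x_1,x_2 \geq 1,
\end{equation*}
which formalizes the intuition that a class-1 (less parallelizable) job is more costly to carry than a class-2 job. Once this holds, the Bellman step at $(x_1,x_2)$ reduces to choosing a split $(r_1,r_2)$ with $r_1+r_2=\beta(x_1,x_2)$ so as to minimize $r_1\,V^*(x_1-1,x_2)+r_2\,V^*(x_1,x_2-1)$; the structural inequality together with the constant-sum constraint forces the minimizer to maximize $r_1$. Since $x_1 s_1(\alpha/x_1)$ is increasing in $\alpha$, this means taking the largest feasible $a_1$, which is precisely the action chosen by $\gstars$. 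Hence $\gstars$ realizes a Bellman-optimal GREEDY action in every state and so is optimal within the class.

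I would establish the structural property by induction on the value-iteration sequence $V_n = \mathcal{T} V_{n-1}$ starting from $V_0\equiv 0$, verifying that the Bellman operator $\mathcal{T}$ preserves the inequality, and then passing to the limit $V_n\to V^*$ via standard MDP convergence. The inductive step is handled by a pairwise coupling of transitions out of $(x_1,x_2-1)$ and $(x_1-1,x_2)$: matching each class of Poisson arrival and each class of $\gstars$ departure across the two states, the one-step comparison reduces to inequalities of the same form between neighbouring state pairs, which are then supplied by the inductive hypothesis.

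The main obstacle will be the coupling for departure transitions, because the two states being compared generally have different maximum rates $\beta(x_1-1,x_2)$ and $\beta(x_1,x_2-1)$ and different optimal splits. I expect to handle this by first establishing, using the concavity and monotonicity of $s_1$ and $s_2$ together with \eqref{eq:beta}, an auxiliary monotonicity property of $\beta$, so that rate discrepancies can be absorbed as virtual self-loops under a common uniformization rate and the inductive hypothesis can then be applied transition by transition; boundary states ($x_1=0$ or $x_2=0$) require separate verification to close the induction.
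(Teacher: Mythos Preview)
Your plan is sound and lands on the same structural lemma the paper uses, namely $V(x_1+1,x_2)\ge V(x_1,x_2+1)$, established inductively through value iteration with the rate asymmetry $\beta(x_1,x_2+1)\ge\beta(x_1+1,x_2)$ (a consequence of $s_2>s_1$) controlling the departure coupling. The packaging differs: the paper proves the inequality for the value function $V^P$ of an \emph{arbitrary} GREEDY policy $P$ (Lemma~\ref{lemma:v}) and then runs a precedence-relations improvement argument---assume an optimal GREEDY $P\neq\gstars$, switch to the $\gstars$ action in a single state, and invoke the lemma to obtain a strictly better policy, a contradiction. You instead work with the optimal $V^*$ and read off $\gstars$ directly as the Bellman minimizer. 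Your route is arguably more direct but carries a small circularity you should make explicit: under the inductive hypothesis on $V_{n-1}$, the Bellman-minimizing action already \emph{is} the $\gstars$ action, so the operator you analyze in the inductive step is really $\mathcal{T}_{\gstars}$. The paper sidesteps this by fixing $P$ throughout.

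One technical ingredient you have underestimated. After absorbing the rate discrepancy into a self-loop at $(x_1+1,x_2)$, the leftover term in the one-step comparison is
\[
\bigl(\beta(x_1,x_2+1)-\beta(x_1+1,x_2)\bigr)\bigl(V_{n-1}(x_1+1,x_2)-V_{n-1}(x_1,x_2)\bigr),
\]
and the sign of the second factor is not a property of $\beta$ but \emph{coordinate-wise monotonicity of $V$ itself}. The paper therefore carries three inequalities through the induction simultaneously---monotonicity in $x_1$, monotonicity in $x_2$, and the comparison inequality---and seeds them with $V_0(x_1,x_2)=x_1+x_2+\frac{x_1}{x_1+x_2+1}$ so that all three hold strictly at the base. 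With your choice $V_0\equiv 0$ you will only get weak inequalities, which still suffices for the theorem's $\le$ conclusion, but you must add the coordinate-wise monotonicity invariants to your induction or the coupling will not close.
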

\begin{proof}
Consider the performance of \gstar{} and $P$ on the state space $\mathcal{S} = \{(x_1, x_2): x_1, x_2 \in \mathbb{N}\}$.  We will use the technique of precedence relations (see, e.g., \cite{adan1994upper,buvsic2012comparing}) to compare the mean number of customers, $\mathbb{E}[N]$, under \gstar{} to that under $P$.  This requires that we define a value function for $P$, $V^P(x_1, x_2)$, and a cost function, $c(x_1, x_2)$.  We define the cost of being in state $(x_1, x_2)$ to be
$$c(x_1, x_2) =x_1 +x_2$$
so that the average cost of performing policy $P$ is equal to the mean number of customers, $\mathbb{E}[N]$.  We then define $V^P(x_1, x_2)$ to be the asymptotic total difference in the accrued cost under $P$ when starting in state $(x_1, x_2)$ as opposed to some designated reference state (see Appendix \ref{pf:lemma:v} for details).  We require the following lemma which establishes a useful property of $V^P$.

\begin{lemma}\label{lemma:v}
For any GREEDY policy, $P$, and any $(x_1, x_2) \in \mathcal{S}$, 
$$V^P(x_1+1, x_2) > V^P(x_1, x_2+1).$$
\end{lemma} 
\begin{proof}
See Appendix \ref{pf:lemma:v} for the proof of Lemma \ref{lemma:v}.
\end{proof}
We now prove Theorem \ref{thm-gstar} by contradiction.  We begin by assuming that there exists a GREEDY policy $P \neq \gstars$ which is optimal in terms of $\mathbb{E}[N]$ and thus $\mathbb{E}[N]^P \leq \mathbb{E}[N]^{P'}$ for any GREEDY policy $P'$.  Since $P \neq \gstars$, there exists some state, $(x_1, x_2) \in \mathcal{S}$ where $P$ and \gstar{} take different actions.  Note that both $x_1$ and $x_2$ must be non-zero in this state, because otherwise there is only one action which will achieve the maximal rate of departures, and $P$ and \gstar{} must therefore take the same action.  

We now consider a policy $P'$ which takes the same action as \gstar{} in state $(x_1, x_2)$, and the same action as $P$ in every other state.  We can apply the technique of precedence relations described in \cite{adan1994upper,buvsic2012comparing} to show that $\mathbb{E}[N]^{P'} < \mathbb{E}[N]^P$.  We start by defining $\gamma^{Q}_i(x_1,x_2)$ to be the rate of departures of class $i$ jobs from the state $(x_1, x_2)$ under policy $Q$.  $P'$ can now be obtained from $P$ by taking $\gamma^{P}_2(x_1, x_2) - \gamma^{\gs}_2(x_1, x_2)$ away from the total completion rate of class 2 jobs and adding $\gamma^{\gs}_1(x_1, x_2) - \gamma^{P}_1(x_1, x_2)$ to the total completion rate of class 1 jobs, where $G^*$ denotes \gstar{}.  Theorem 3.1 in \cite{buvsic2012comparing} tells us that $\mathbb{E}[N]^{P'} < \mathbb{E}[N]^P$ if: 
\begin{align*}
\left(\gamma_1^{\gs}(x_1, x_2)-\gamma_1^{P}(x_1, x_2)\right)V^P(x_1-1, x_2)  < \left(\gamma_2^P(x_1, x_2)-\gamma_2^{\gs}(x_1, x_2)\right)V^P(x_1, x_2-1).
\end{align*}

To see that this property holds, first note that 
\begin{align*}
&\left(\gamma_1^{\gs}(x_1, x_2)-\gamma_1^{P}(x_1', x_2')\right) - \left(\gamma_2^P(x_1, x_2)-\gamma_2^{\gs}(x_1, x_2)\right)\\
 =& \left(\gamma_1^{\gs}(x_1, x_2) + \gamma_2^{\gs}(x_1, x_2)\right) - \left(\gamma_1^{P}(x_1, x_2) +\gamma_2^P(x_1, x_2)\right)=0
\end{align*} 
since \gstar{} and $P$ have the same (maximal) total rate of departures in every state.  Thus,
$$\gamma_1^{\gs}(x_1, x_2)-\gamma_1^{P}(x_1, x_2) = \gamma_2^P(x_1, x_2)-\gamma_2^{\gs}(x_1, x_2).$$
By Lemma \ref{lemma:v}, we know that
$$V^P(x_1-1, x_2) < V^P(x_1, x_2-1).$$
This implies that $\mathbb{E}[N]^{P'} < \mathbb{E}[N]^P$ which contradicts our assumption that $P$ is optimal in terms of $\mathbb{E}[N]$.  By Little's Law, we can reformulate this in terms of $\mathbb{E}[T]$.
\end{proof}

While \gstar{} is the best GREEDY policy, it will turn out that it is not optimal (see Section \ref{greedy-near-opt}).  Hence, we now turn our attention to computing the optimal policy.

\subsection{Computing the Optimal Policy}\label{mdp-opt}
The optimal policy, OPT, must not only consider the current state of the system when choosing how to determine the best partition $(a_1, a_2)$, but must also consider the probabilities of transitioning to future states as well.  To find a policy which balances this tradeoff between performance in the current state and future states, we formulate the problem as a Markov Decision Process (MDP).

We will consider an MDP with state space $\mathcal{S} = \{(x_1, x_2): x_1, x_2 \in \mathbb{N}\}$, where $x_i$ represents the number of class $i$ jobs in the system.  The action space in any state is given by $\mathcal{A} = \left\{(a_1, a_2): a_1+a_2 = n\right\}$.  Let the arrival rate of class $i$ jobs be given by $\Lambda_i$.  Recall that, given an allocation of $a_i$ cores to $x_i$ type $i$ jobs, it is optimal to run the $x_i$ jobs on these cores using EQUI.  Thus, given a state $(x_1, x_2)$ and an action $(a_1, a_2)$, the total departure rate of class $i$ jobs from the system is given by 
$$\mu_i(a_i, x_i) := \min\{a_i, x_i\}\mu s\left(\max\left\{1,\frac{a_i}{x_i}\right\}\right).$$
 We choose the cost function 
 $$c(x_1, x_2) = x_1+x_2$$
 such that the \emph{average cost per period} equals the average number of jobs in the system, $\mathbb{E}[N]$.  We uniformize the system at rate 1 (always achievable by scaling time) and find that Bellman's optimality equations \cite{puterman2014markov}  for this MDP are given by
\begin{equation*}
\E{N^{OPT}}+V^{OPT}(x_1,x_2) = A^{OPT}(x_1, x_2) + H^{OPT}(x_1, x_2), 
\end{equation*}

where 
\begin{align}
A^{OPT}(x_1, x_2) =\ c(x_1,x_2) &+ \Lambda_1\left(V^{OPT}(x_1+1,x_2)-V^{OPT}(x_1, x_2)\right) \notag\\
&+ \Lambda_2 \left(V^{OPT}(x_1, x_2+1)-V^{OPT}(x_1, x_2)\right), \label{eq:A}
\end{align}
\begin{align*}
H^{OPT}(x_1, x_2) = V^{OPT}(x_1, x_2) + \min_{(a_1, a_2) \in \mathcal{A}}\biggl\{ \hspace{1.4in}
\end{align*}
\begin{align}
 \mu_1(a_1, x_1)&\left(V^{OPT}\left((x_1-1)^+, x_2\right)-V^{OPT}(x_1, x_2)\right) \notag\\
+ \mu_2(a_2, x_2)&\left(V^{OPT}\left(x_1, (x_2-1)^+\right)-V^{OPT}(x_1, x_2)\right)\biggr\}. \label{eq:H}
\end{align}
Here, the value function $V^{OPT}(x_1,x_2)$ denotes the asymptotic total difference in accrued costs when using the optimal policy and starting the system in state $(x_1,x_2)$ instead of some reference state.  While these equations are hard to solve analytically, the optimal actions can be obtained numerically by defining $$V^{OPT}_{n+1}(x_1, x_2) = A^{OPT}_n(x_1, x_2)+H^{OPT}_n(x_1, x_2)$$ 
with $V^{OPT}_0(\cdot, \cdot) = 0$.  
Here, $A_n^{OPT}$ and $H_n^{OPT}$ are defined as in \eqref{eq:A} and \eqref{eq:H}, but in terms of $V_n^{OPT}$.  We can then perform \emph{value iteration} to extract the optimal policy \cite{lippman1973semi}.  We use the results of this value iteration to compare the performance of OPT to several other policies in Figure \ref{fig:heat} and Figure \ref{fig:heat:static}. 

Note that using the same MDP formulation when there exists only one speedup function results in a much simpler expression for $V^{OPT}$ which clearly yields EQUI.

\subsection{\gstar{} is Near-Optimal}\label{greedy-near-opt}
Surprisingly, even \gstar{} is not optimal for minimizing mean response time as shown in Figure \ref{fig:heat:greedy} (although it is always within 1\% of OPT in the figure).  The same intuition that led us to believe that \gstar{} is the best GREEDY policy can be used to explain why \gstar{} is not optimal.  We have already seen that deferring parallelizable work is advantageous to \gstar{}.  However, we were only comparing \gstar{} to policies with maximal overall departure rate.  It turns out that the advantage of deferring parallelizable work can be so great that a policy stands to benefit from achieving a submaximal overall rate of departures in order to defer parallelizable work.  


\subsection{Does Fixed-Width Scheduling Work?}\label{fixed-width-multi}

\begin{figure}[t!]
\centering
\includegraphics[width=.45\textwidth]{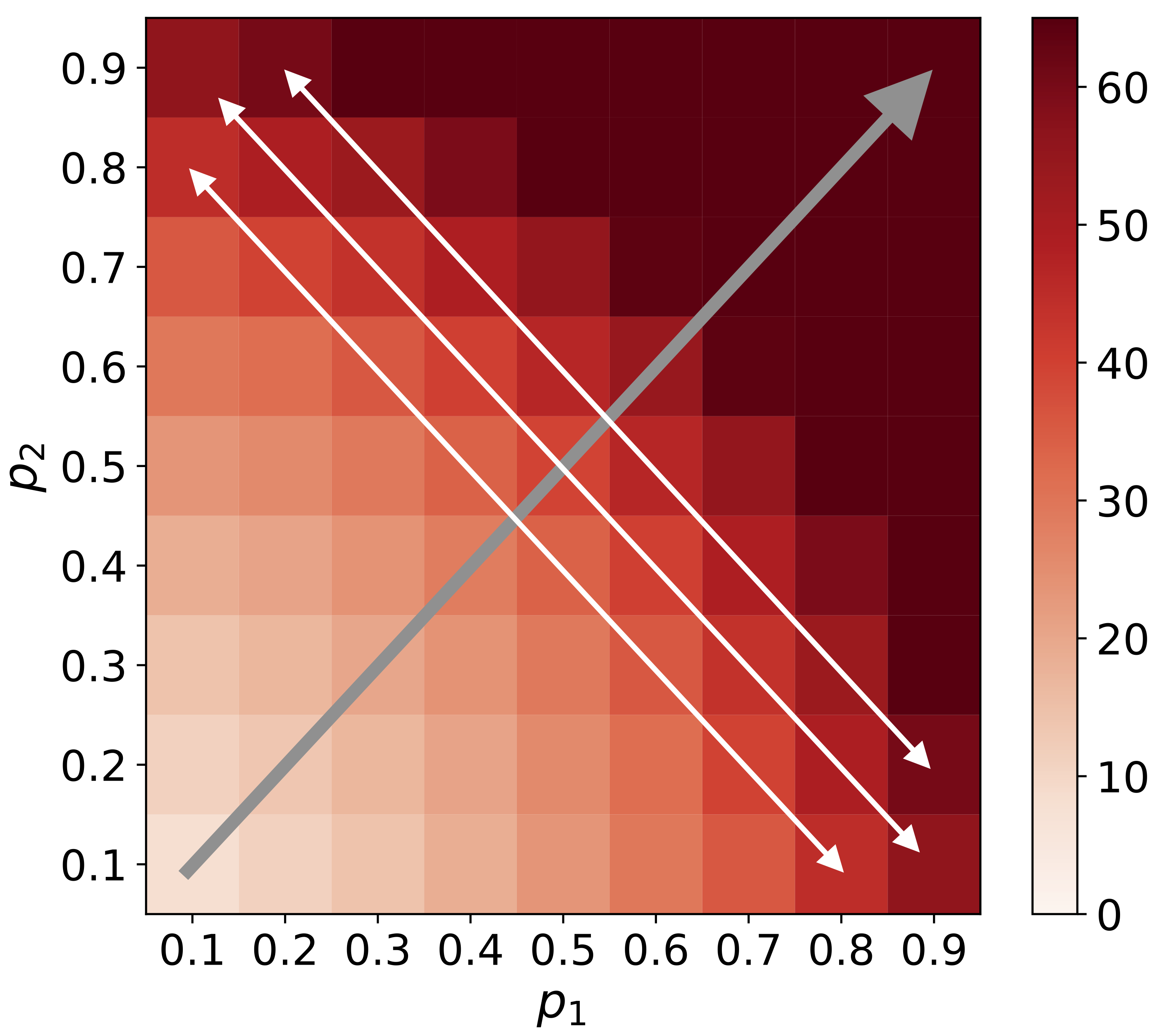}
\caption{Heat map showing the percentage difference in the mean response time, $\mathbb{E}[T]$, between JSQ-Chunk with the optimal $k^*$ and OPT , in the case of two speedup functions where $s_1$ and $s_2$ are Amdahl's law with parameters $p_1$ and $p_2$ respectively.  Here $\mathbb{E}[X_1]=\mathbb{E}[X_2]=1$ and $\Lambda_1=\Lambda_2=2$.  The axes represent the value of $p_i$ for each class.  OPT was evaluated numerically using the MDP formulation given in Section \ref{mdp-opt} and the results for JSQ-Chunk come from analysis.} 
\label{fig:heat:static}
\end{figure}

We saw that, when jobs follow a single speedup function, the JSQ-Chunk policy with the optimal chunk size $k^*$ will often achieve near-optimal performance (see Section \ref{jsq:near:opt}).  Is this still true when jobs are permitted to have different speedup curves?  Figure \ref{fig:heat:static} compares JSQ-Chunk with OPT for the case of two speedup functions.  We see two trends:
\vspace{.1in}

\noindent\textbf{Trend 1: Along the thick arrow in Figure \ref{fig:heat:static}, JSQ-Chunk becomes further from OPT as $p_1 + p_2$ increases.}  

Trend 1 makes intuitive sense since, when jobs are more parallelizable, OPT will be able to more effectively exploit this parallelism while JSQ-Chunk will be limited by being restricted to use a single $k^*$ for each job.  Nonetheless, trend 1 becomes irrelevant as the number of cores, $n$, increases, since JSQ-Chunk converges to OPT when $p_1$ equals $p_2$ (single speedup function).  Thus, we are more interested in trend 2.

\vspace{.1in}
\noindent\textbf{Trend 2:  Along any thin diagonal in Figure \ref{fig:heat:static} ($p_1 + p_2 = \mbox{constant}$), JSQ-Chunk becomes further from OPT as we move outward on the diagonal.}

Trend 2 follows from two observations.   First, Theorem \ref{thm:jsq:const} proves that JSQ-Chunk's performance is fixed along these diagonals.  Second, we have reason to believe that the mean response time under OPT should decrease as we move further outward along these diagonals (see Observation \ref{obs:opt}).  Together, these observations explain trend 2.  The rest of this section discusses Theorem \ref{thm:jsq:const} and Observation \ref{obs:opt}.

\begin{theorem}\label{thm:jsq:const}
Given two speedup functions, $s_1$ and $s_2$, where $s_1$ follows Amdahl's law with parameter $p_1$ and $s_2$ follows Amdahl's law with parameter $p_2$, and $\Lambda_1 = \Lambda_2$, the mean response time under JSQ-Chunk with the optimal $k^*$ is constant for any $(p_1, p_2)$ such that $p_1 + p_2=c$, where $c$ is a constant. 
\end{theorem}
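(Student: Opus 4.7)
\medskip

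\noindent\textbf{Proof proposal.} My plan is to reduce the two-class JSQ-Chunk system to a single-class JSQ system and then exploit the fact that the approximation of \cite{Nelson1993PerfromEval} depends on the service-time distribution only through its mean. First, I would extend Observation \ref{obsJSQ} to the two-class setting: since every core within a chunk has the same state, JSQ-Chunk with chunk size $k$ is equivalent to a JSQ system with $c = n/k$ PS queues, total Poisson arrival rate $\Lambda_1 + \Lambda_2$, and a service time drawn from the mixture
\begin{equation*}
X_k \;\stackrel{d}{=}\; \frac{\Lambda_1}{\Lambda_1+\Lambda_2}\cdot\frac{X_1}{s_1(k)} \;+\; \frac{\Lambda_2}{\Lambda_1+\Lambda_2}\cdot\frac{X_2}{s_2(k)}.
\end{equation*}

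Next, I would compute the mean of this mixture explicitly under Amdahl's law. Using $1/s_i(k) = p_i/k + 1 - p_i$, $\Lambda_1 = \Lambda_2$, and $\mathbb{E}[X_1] = \mathbb{E}[X_2] := \mathbb{E}[X]$,
\begin{equation*}
\mathbb{E}[X_k] \;=\; \frac{\mathbb{E}[X]}{2}\!\left(\frac{1}{s_1(k)} + \frac{1}{s_2(k)}\right) \;=\; \frac{\mathbb{E}[X]}{2}\!\left(\frac{p_1+p_2}{k} + 2 - (p_1+p_2)\right) \;=\; \frac{\mathbb{E}[X]}{2}\!\left(\frac{c}{k} + 2 - c\right),
\end{equation*}
which depends on $(p_1,p_2)$ only through the sum $c = p_1 + p_2$.

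The third step is to invoke the JSQ approximation of \cite{Nelson1993PerfromEval} that is used throughout Section \ref{jsq}. That approximation expresses the mean response time as $W_{M/M/c}(\rho) S(\rho) R(\rho) + \mathbb{E}[X_k]$ with $\rho = (\Lambda_1+\Lambda_2)\mathbb{E}[X_k]/c$, and it only uses the \emph{mean} service requirement (this is legitimate here because cores are PS, so the JSQ approximation is essentially insensitive beyond the mean, exactly as argued in Section \ref{jsq} for the single-class model). Consequently, for every fixed chunk size $k$, the approximated mean response time under JSQ-Chunk is a function solely of $k$, $n$, $\Lambda_1+\Lambda_2$, $\mathbb{E}[X]$, and the sum $c = p_1+p_2$; it is therefore constant along any diagonal $p_1+p_2 = c$. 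Minimizing over $k$ preserves this invariance, so both the optimal chunk size $k^*$ and $\mathbb{E}[T]^{\text{JSQ-Chunk}}$ at $k^*$ are constant on the diagonal, as claimed.

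The main obstacle I expect is the first step: carefully justifying that the single-class JSQ approximation remains valid when arrivals form a mixture of two service-time distributions. I would handle this by noting that the chunk-level dynamics merge the two Poisson streams into one Poisson stream with service distribution $X_k$, and that the approximation in \cite{Nelson1993PerfromEval} treats the service distribution only through its mean (a point already used in Section \ref{jsq} to transfer from exponential-FCFS to general-PS queues); hence equal means imply equal predictions regardless of the higher moments produced by the mixture.
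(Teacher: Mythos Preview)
Your proposal is correct and follows essentially the same route as the paper: both reduce the two-class JSQ-Chunk system to a single-class JSQ system with the mixture service time $X_k$, show that $\mathbb{E}[X_k]$ depends on $(p_1,p_2)$ only through $p_1+p_2$ (you compute this directly from $1/s_i(k)=p_i/k+1-p_i$, while the paper phrases the same calculation as an averaging property yielding an Amdahl curve with parameter $(p_1+p_2)/2$), and then invoke the fact that the JSQ-Chunk approximation depends only on $\mathbb{E}[X_k]$ before minimizing over $k$. Your added discussion of why the mean-only dependence carries over to the mixture case is, if anything, more explicit than the paper's one-line appeal to Section~\ref{jsq}.
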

\begin{proof}
In the case of two speedup functions, under JSQ-Chunk with level of parallelization, $k$, $X_k$ from \eqref{eq:service} becomes:
\vspace{-0.1in}
\begin{equation*}X_k = \begin{cases}
\frac{X}{s_1(k)} & \mbox{w.p.} \quad \frac{\Lambda_1}{\Lambda_1 + \Lambda_2}\\
\frac{X}{s_2(k)} & \mbox{w.p.} \quad \frac{\Lambda_2}{\Lambda_1 + \Lambda_2} \end{cases}.
\end{equation*}

We now prove that, for any given $k$, $\mathbb{E}[X_k]$ is constant whenever $p_1 + p_2 =c$.
By definition, when $\Lambda_1 = \Lambda_2$,
\begin{equation}\mathbb{E}[X_k]=\frac{1}{2} \cdot \frac{\mathbb{E}[X]}{s_1(k)} +\frac{1}{2} \cdot \frac{\mathbb{E}[X]}{s_2(k)}.\label{eq:exk}\end{equation}
Since $s_1$ and $s_2$ are both instances of Amdahl's law, we can use a property of Amdahl's law that states 
\begin{equation}\frac{1}{2} \cdot \frac{1}{s_1(k)} +\frac{1}{2} \cdot \frac{1}{s_2(k)} = \frac{1}{s_3(k)}, \label{eq:amdahl}\end{equation}
where $s_3(k)$ is the speedup function for Amdahl's law with parameter $p_3 = \frac{p_1 + p_2}{2}$.  Combining \eqref{eq:exk} and \eqref{eq:amdahl} we have 
$$\mathbb{E}[X_k]=\frac{\mathbb{E}[X]}{s_3(k)},$$
which is a constant, provided that $p_1 + p_2 =c$.  

As stated in Section \ref{jsq}, the performance of JSQ-Chunk depends \emph{only} on $\mathbb{E}[X_k]$.  Hence, along any diagonal where $p_1+p_2 = c$, the mean response time under JSQ-Chunk with chunk size $k$ remains constant.  Since any choice of $k$ leads to constant performance along the diagonal, setting $k$ to $k^*$ will also keep JSQ-Chunk constant along this diagonal.  Hence, the mean response time under JSQ-Chunk with the optimal $k^*$ is constant when $p_1 + p_2 =c$.
\end{proof}

\begin{observation}\label{obs:opt}
Along any diagonal where $p_1 + p_2 =c$, where $c$ is a constant and $\Lambda_1 = \Lambda_2$, we expect the mean response time under OPT to decrease as $\mid p_1 - p_2 \mid$ increases.  
\end{observation}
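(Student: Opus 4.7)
My plan is to justify this observation by combining two structural properties of the MDP formulation of Section \ref{mdp-opt}: the convexity of the maximum departure rate $\beta(x_1, x_2)$ in the parameter vector $(p_1, p_2)$, and the symmetry of $\mathbb{E}[T]^{\mathrm{OPT}}$ under class relabeling. Together, these make $(p_1, p_2) = (c/2, c/2)$ the maximizer of $\mathbb{E}[T]^{\mathrm{OPT}}$ along the diagonal $p_1 + p_2 = c$, with response time decreasing as $|p_1 - p_2|$ grows.

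For the convexity step, the identity $1/s_p(k) = p/k + 1 - p$ yields
\begin{equation*}
\frac{\partial^2 s_p(k)}{\partial p^2} \;=\; \frac{2\,(1 - 1/k)^2}{(p/k + 1 - p)^3} \;>\; 0
\end{equation*}
for every $k > 1$, so $s_p(k)$ is strictly convex in $p$. In any fixed state $(x_1, x_2)$ and at any fixed allocation $\alpha$, the map $(p_1, p_2) \mapsto x_1 s_{p_1}(\alpha/x_1) + x_2 s_{p_2}((n-\alpha)/x_2)$ is therefore convex, and its pointwise supremum $\beta(x_1, x_2; p_1, p_2)$ over $\alpha$ inherits this convexity. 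Parametrizing the diagonal by $p_1 = c/2 - \delta$, $p_2 = c/2 + \delta$, convexity yields $\beta(x_1, x_2; \delta) + \beta(x_1, x_2; -\delta) \geq 2\,\beta(x_1, x_2; 0)$ in every state, with strict inequality for $\delta \neq 0$.

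For the symmetry step, the MDP is invariant under the simultaneous relabeling $(\Lambda_1, \mathbb{E}[X_1], p_1, x_1) \leftrightarrow (\Lambda_2, \mathbb{E}[X_2], p_2, x_2)$. Under the assumptions $\Lambda_1 = \Lambda_2$ and $\mathbb{E}[X_1] = \mathbb{E}[X_2]$, this yields $\mathbb{E}[T]^{\mathrm{OPT}}(p_1, p_2) = \mathbb{E}[T]^{\mathrm{OPT}}(p_2, p_1)$, so as a function of $\delta$ along the diagonal, $\mathbb{E}[T]^{\mathrm{OPT}}$ is even.

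The observation would follow from a coupling argument combining these ingredients. I would construct, in the $(c/2 - \delta, c/2 + \delta)$-system, an auxiliary policy $\sigma$ that simulates OPT of the symmetric $(c/2, c/2)$-system after uniformizing both chains and feeding them common Poisson arrivals, and then apply precedence-relation techniques as in the proof of Theorem \ref{thm-gstar} to propagate the pointwise departure-rate gain of $\sigma$ into an inequality on the value functions of the two systems. Optimality of OPT in the perturbed system would then yield $\mathbb{E}[T]^{\mathrm{OPT}}(c/2 - \delta, c/2 + \delta) \leq \mathbb{E}[T]^{\mathrm{OPT}}(c/2, c/2)$. The main obstacle is precisely this last step: unlike in Theorem \ref{thm-gstar}, the two chains have different transition rates (not merely different policies on the same chain), so pointwise domination of $\beta$ does not immediately lift to value-function domination. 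A rigorous version would require control on cross-class value-function differences in the spirit of Lemma \ref{lemma:v}, but now transferred across two different parameter regimes. It is this technical difficulty, rather than any failure of the convexity and symmetry intuition, that leads the authors to state the result as an observation supported by the numerical evidence of Figure \ref{fig:heat:static}.
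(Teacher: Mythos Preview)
Your approach is genuinely different from the paper's. The paper does \emph{not} prove this statement at all: it is labeled an Observation, and the only justification offered is a one-paragraph heuristic to the effect that, along the diagonal, mean job size is held fixed while larger $|p_1-p_2|$ gives OPT ``additional information about which jobs will take longer to run,'' allowing it to favor the more parallelizable class; the paper then points to an analogous phenomenon in prior work on heterogeneous service rates. There is no convexity computation, no symmetry argument, and no coupling.

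Your route via convexity of $s_p(k)$ in $p$ together with the relabeling symmetry is more structural and, as far as it goes, correct: your second-derivative calculation is right, the pointwise supremum over $\alpha$ does preserve convexity in $(p_1,p_2)$, and the evenness of $\mathbb{E}[T]^{\mathrm{OPT}}$ in $\delta$ follows from the class-swap invariance. What these ingredients actually deliver, however, is only the symmetrized inequality $\beta(x_1,x_2;\delta)+\beta(x_2,x_1;\delta)\geq 2\,\beta(x_1,x_2;0)$, not pointwise domination $\beta(x_1,x_2;\delta)\geq \beta(x_1,x_2;0)$ in every state. You identify precisely this as the obstacle to the coupling step, and that diagnosis is accurate: because the two systems have different transition kernels (not merely different policies on a common kernel), the precedence-relation machinery of Theorem~\ref{thm-gstar} does not transfer directly. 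So your proposal is an honest and more quantitative partial argument than the paper's, with the gap correctly located; the paper simply sidesteps the issue by presenting the claim as an observation backed by the numerics in Figure~\ref{fig:heat:static}.
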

Although the mean job size is constant along this diagonal, when $\mid p_1 - p_2 \mid$ is higher, the optimal policy has additional information about which jobs will take longer to run.  In effect, this allows OPT to favor jobs which benefit more from parallelization and leads to a lower overall mean response time.  A similar effect was observed in \cite{DBLP:journals/ior/XuSS15}, where assigning jobs different service rates lowered optimal mean response time.

\begin{remark}
Although Theorem \ref{thm:jsq:const} and Observation \ref{obs:opt} assume that $\Lambda_1 = \Lambda_2$, they are easily generalized to cases where the arrival rates are not equal.  In these cases, the diagonals along which $\mathbb{E}[X_k]$ will be constant will have a different slope.  Nonetheless, trend 2 will still occur along these diagonals.
\end{remark}

Unlike trend 1, which disappears as the number of cores, $n$, increases, we do not expect trend 2 to vanish.  This is supported by our evaluation of OPT under higher values of $n$ (not shown).  We thus conclude that JSQ-Chunk does not perform near-optimally when jobs follow multiple speedup functions.
\section{Conclusion and Future Work}

\noindent\textbf{Summary:} This paper introduces the question of how to allocate cores to jobs in a stochastic model of a multi-core machine where jobs have sublinear speedup functions.  While it is typical for the \emph{user} to specify the desired level of parallelization for her jobs, this paper instead proposes that allowing the \emph{system} to schedule jobs can benefit overall mean response time.  In the case where all jobs are malleable and follow the same speedup function, we prove that the well-known EQUI policy is optimal when job sizes are exponentially distributed.  In the more practical setting where jobs are moldable rather than malleable, we prove that, surprisingly, one can still achieve near-optimal performance by using the optimal \emph{fixed} level of parallelization, $k^*$.  We show how to analytically determine $k^*$ as a function of system load, the speedup curve, the job size distribution, the number of cores, and the dispatching policy.

When jobs are permitted to have multiple speedup functions, the question of optimal scheduling becomes even harder.  We show that EQUI is no longer optimal for scheduling malleable jobs when jobs follow multiple speedup functions.  We find that the optimal policy (OPT) must balance the tradeoff between maximizing the total rate of departures in every state and deferring parallelizable work to minimize wasting cores in the future.  We provide an MDP based formulation of OPT.  As an alternative, we introduce a very simple policy, \gstar{}, which can be easily implemented and performs near-optimally in practice.  In the more practical setting where jobs are moldable but not malleable, we again consider fixed-width scheduling but find that it is too inflexible to balance the tradeoff considered by the optimal policy.\vspace{.2in}

\noindent\textbf{Limitations and Future Work:} We see this paper as opening up a new line of research within the field of parallel scheduling.  One limitation of our work is that handling multiple speedup functions leads to a complex optimization problem.  It may be easier to consider a continuous regime where jobs draw their speedup functions from a distribution.  A second limitation of our model is that we assume a stream of jobs where jobs have no internal dependency structure.  In reality, as the SPAA community has pointed out, a single job usually consists of multiple interdependent tasks.  The problem of how to schedule a stream of jobs which have these dependency structures is open, even in the case of a single processor \cite{ziv}.  Hence, scheduling these jobs on multiple cores presents an exciting direction for future work.  Finally, while we are the first to provide optimality proofs for policies like EQUI, our optimality results rely on job sizes being exponentially distributed.  As explained, moving to general job size distributions makes scheduling much more complex by forcing policies to consider the distribution of residual job lifetimes.   Although this paper does not model all of these aspects, we hope to provide a starting point for further research in this area.

\clearpage
\section{Appendix}
\subsection{Proof of Lemma \ref{claim-increasing}}\label{pf-increasing}
LEMMA 2.2 \emph{For any concave, sublinear function, $s$, the function $i \cdot s\left(\frac{n}{i}\right)$ is increasing in $i$ for all $i<n$, and is non-decreasing in $i$ for all $i\geq n$.}
\begin{proof}
To see that $i \cdot s\left(\frac{n}{i}\right)$ is increasing in $i$ when $i<n$, we can consider the following difference for any $\delta > 0$:
$$i\cdot(1+\delta)s\left(\frac{n}{i\cdot(1+\delta)}\right) - i \cdot s\left(\frac{n}{i}\right)= i\left((1+\delta)s\left(\frac{n}{i\cdot(1+\delta)}\right) - s\left(\frac{n}{i}\right)\right).$$
Since $\frac{n}{i}>1$, $s$ increases sublinearly, and $(1+\delta)s\left(\frac{n}{i\cdot(1+\delta)}\right) > s\left(\frac{n}{i}\right)$.  Thus
$$i\cdot(1+\delta)s\left(\frac{n}{i\cdot(1+\delta)}\right) - i \cdot s\left(\frac{n}{i}\right) > 0$$
and $i \cdot s\left(\frac{n}{i}\right)$ is increasing in $i$.

For any $i\geq n$, we have assumed $s\left(\frac{n}{i}\right)=\frac{n}{i}$.  Thus, 
$$i \cdot s\left(\frac{n}{i}\right)=n \indent \forall i \geq n,$$
which is non-decreasing in $i$.
\end{proof}

\subsection{Mixed-Random-Chunk}\label{sec-mrc}
In this section, we explore Mixed-Random-Chunk (MRC), which uses two chunk sizes: $k_1$ and $k_2$.  We will also restrict ourselves to consider policies which balance load between chunks according to the size of a chunk.  That is, given that jobs are dispatched to an average of $k$ cores, the average per-core arrival rate should be $\Lambda k/n$.

Under Mixed-Random-Chunk, we group $a_1$ of the $n$ cores into chunks of size $k_1$. We call these chunks size-$k_1$ chunks. The remaining $a_2 := n-a_1$ cores are grouped into size-$k_2$ chunks.  Like before, we assume that $k_1$ divides $a_1$, and $k_2$ divides $a_2$.

 \begin{figure}[h!]
\centering
\includegraphics[width=.3\textwidth]{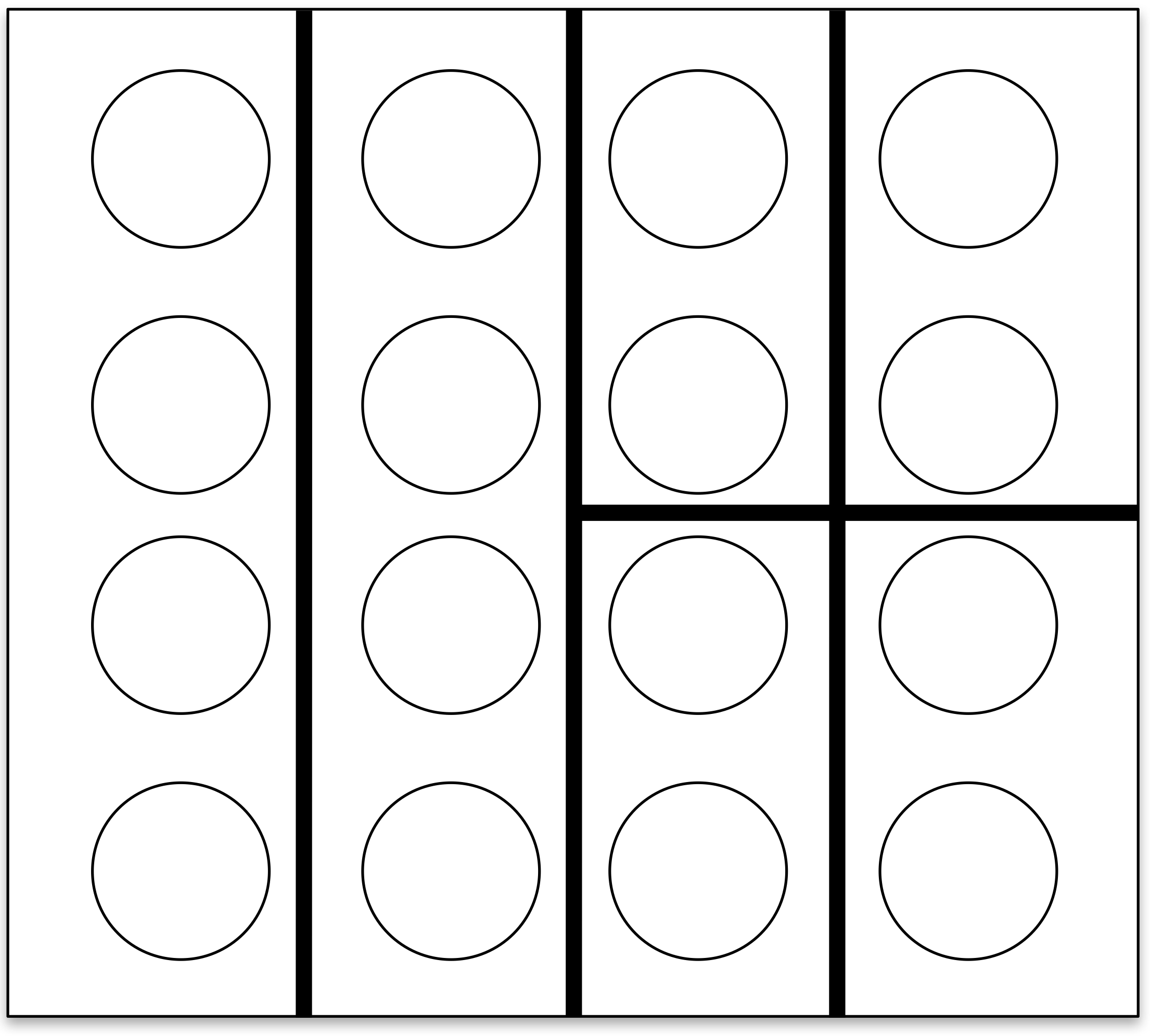}
\caption{A system with $n=16$ under a Mixed-Random-Chunk policy.  Here, $a_1=8$ of the cores are grouped into chunks of size $k_1=4$, and $a_2=8$ of the cores are grouped into chunks of size $k_2=2$.}
\label{fig:mixed-chunks}
\end{figure}

When a job arrives, we choose a core uniformly at random (with probability $\frac{1}{n}$).   The job is then parallelized across all cores in the chosen core's chunk.  We can see via Poisson splitting that the arrivals to a size-$k_i$ chunk form a Poisson process with rate $\frac{\Lambda k_i}{n}$.  Under this policy, the level of parallelization is defined to be $k=(a_1k_1+a_2k_2)/n$.  Hence, the average per-core arrival rate is
$$\Lambda_k = \frac{a_1}{n} \cdot \frac{\Lambda k_1}{n} + \frac{a_2}{n} \cdot \frac{\Lambda k_2}{n}= \frac{\Lambda k}{n},$$
and Mixed-Random-Chunk obeys the load balancing property outlined above.

\subsubsection*{Mean Response Time}
To derive the mean response time $\E{T}$ under Mixed-Random-Chunk, we must condition on the type of chunk to which an arrival is sent.  We denote by $E_i$ the event that the arrival is sent to a size $k_i$ chunk.  Recall that the arrival rate to a core in a size-$k_i$ chunk is $\Lambda_{k_i} = \Lambda k_i /n$ and the size of a job piece at this core is distributed as $X_{k_i} = X/\text{s(}k_i)$.  This information suffices to calculate the mean response time of a job sent to a size-$k_i$ chunk, $\E{T \mid E_i}$ (see Theorem \ref{rc-thm-mrt} and its proof). Next, note that a job will be sent to a size-$k_i$ chunk with probability $a_i/n$. Hence, we can find the mean response time for the entire system by conditioning:
\begin{equation}\label{eq:ET}
    \E{T}^{\mbox{\scriptsize MRC}} =  \frac{a_1}n\E{T \mid E_1} +\frac{a_2}n \E{T \mid E_2}.
\end{equation}

\subsubsection*{Optimizing $a_1$ and $a_2$}\label{sec:optET}
In \eqref{eq:ET} observe that $a_1$ and $a_2$ are decision variables that can be chosen to minimize mean response time.  Surprisingly, we find in Theorem \ref{sec:optETThm} that the optimal choice is to have only \emph{one} chunk size.

\begin{theorem}\label{sec:optETThm}
The mean response time under Mixed-Random-Chunk is minimized when the number of size-$k_1$ chunks, $a_1$, is either $0$ or $n$.
\end{theorem}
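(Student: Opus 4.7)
The plan rests on a single clean observation: under the load-balancing constraint (per-core arrival rate $\Lambda k_i/n$ into any size-$k_i$ chunk), the conditional mean response times $\E{T \mid E_i}$ depend only on $k_i$ and $\rho$, not on the allocation $(a_1,a_2)$. Consequently $\E{T}^{\mbox{\scriptsize MRC}}$ will be an affine function of $a_1$ on $[0,n]$, and its minimum will be attained at an endpoint.

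First I would compute $\E{T \mid E_i}$. Within one size-$k_i$ chunk, arrivals form a Poisson process with rate $\Lambda k_i/n$ and each job piece has size $X/s(k_i)$. Since all $k_i$ pieces of a job experience identical states in the chunk, they complete simultaneously, so the job's response time equals that of any one of its pieces. Applying exactly the $M/G/1/PS$ argument used in the proof of Theorem \ref{rc-thm-mrt} to a single core in a size-$k_i$ chunk yields
$$\E{T \mid E_i} \;=\; \frac{\E{X}}{s(k_i)-k_i\rho},$$
an expression that does not involve $a_i$ at all.

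Substituting into \eqref{eq:ET} and using $a_2 = n-a_1$ gives
$$\E{T}^{\mbox{\scriptsize MRC}} \;=\; \frac{a_1}{n}\cdot\frac{\E{X}}{s(k_1)-k_1\rho} \;+\; \frac{n-a_1}{n}\cdot\frac{\E{X}}{s(k_2)-k_2\rho},$$
which is affine in $a_1 \in [0,n]$. An affine function on a closed interval attains its minimum at an endpoint, so the optimum is $a_1 \in \{0,n\}$. Each endpoint reduces Mixed-Random-Chunk to pure Random-Chunk with a single chunk size ($k_2$ or $k_1$, respectively).

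The only thing to check is stability: if $s(k_i)-k_i\rho \leq 0$ for some $i$, then the corresponding chunk type is unstable and $\E{T \mid E_i}=\infty$, in which case the conclusion is trivial because only the opposite endpoint gives a finite value. When both chunk types are stable both coefficients are positive and finite, and the affine-minimization argument applies directly. There is no real obstacle here; the whole point is that the load-balancing assumption decouples the per-chunk performance from the allocation, collapsing what could have been a nontrivial joint optimization into a one-dimensional linear program.
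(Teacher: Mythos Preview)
Your proposal is correct and follows essentially the same approach as the paper: both arguments hinge on observing that $\E{T\mid E_i}$ depends only on $k_i$ and $\rho$ (not on the allocation), so that substituting $a_2=n-a_1$ into \eqref{eq:ET} yields an affine function of $a_1$ whose minimum lies at an endpoint. Your version is slightly more explicit in writing out the closed form for $\E{T\mid E_i}$ and in handling the degenerate unstable case, but the core idea is identical.
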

\begin{proof}
Since $a_2 = n-a_1$, \eqref{eq:ET} can be rewritten as
\begin{equation*}
\E{T} = \frac{a_1}n\left(\E{T \mid E_1} - \E{T \mid E_2}   \right)+\E{T \mid E_2}.
\end{equation*}
This expression is of the form $\E{T} = a_1x+y$, and thus the mean response time is linear in $a_1$.  We can also see that $x$ and $y$ are independent of $a_1$ and $a_2$.  This follows since, for any core belonging to a chunk of size $k_i$, neither the service requirement, $X_{k_i}=X_{k_i}/s(k_i)$, nor the arrival rate $\Lambda_{k_i}=\Lambda k_i/n$ depend on $a_i$.  Thus, our linear expression is either increasing or decreasing in $a_1$, and will thus be minimized at a boundary where $a_1=0$ or $a_1=n$.
\end{proof}

\subsubsection*{Variance of Response Time}\label{sec:VarT}
Not only is mean response time not improved by having multiple chunk sizes, but neither is the variance of response time, as shown in Theorem \ref{sec:optVarTThm}.
\begin{theorem}\label{sec:optVarTThm}
The variance of the response time, $\Var{T}$, under Mixed-Random-Chunk is minimized when the number of size-$k_1$ chunks, $a_1$, is either $0$ or $n$.
\end{theorem}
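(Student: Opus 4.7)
My plan is to mimic the proof of Theorem \ref{sec:optETThm}, but use the law of total variance in place of the law of total expectation. Let $E_i$ denote the event that an arriving job is dispatched to a size-$k_i$ chunk, and write $p_i := a_i/n$, so $p_1 + p_2 = 1$. Let $m_i := \mathbb{E}[T \mid E_i]$ and $v_i := \mathrm{Var}(T \mid E_i)$. The key observation, already used in the proof of Theorem \ref{sec:optETThm}, is that conditional on being dispatched to a size-$k_i$ chunk, the job experiences an $M/G/1/PS$ queue whose arrival rate $\Lambda_{k_i} = \Lambda k_i/n$, per-core load, and job-piece size distribution $X_{k_i} = X/s(k_i)$ depend only on $k_i$ and not on the split $a_1$. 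Hence $m_i$ and $v_i$ are independent of $a_1$; the only dependence on $a_1$ enters through the dispatching probabilities $p_1 = a_1/n$ and $p_2 = 1 - a_1/n$.

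By the law of total variance applied to the two-valued conditioning variable $E$,
\begin{equation*}
\mathrm{Var}(T) \;=\; \mathbb{E}[\mathrm{Var}(T \mid E)] + \mathrm{Var}(\mathbb{E}[T \mid E]) \;=\; p_1 v_1 + p_2 v_2 + p_1 p_2 (m_1 - m_2)^2,
\end{equation*}
where I used the standard identity that for a Bernoulli mixture the between-group variance equals $p_1 p_2 (m_1 - m_2)^2$. Substituting $p_2 = 1 - p_1$, this becomes
\begin{equation*}
\mathrm{Var}(T) \;=\; v_2 + p_1 (v_1 - v_2) + p_1(1-p_1)(m_1 - m_2)^2,
\end{equation*}
which is a quadratic in $p_1$ whose leading coefficient is $-(m_1-m_2)^2 \le 0$. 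Thus $\mathrm{Var}(T)$ is a concave function of $p_1 \in [0,1]$, and a concave function on a compact interval attains its minimum at an endpoint. Translating back to $a_1 = n p_1 \in \{0, n\}$ gives the theorem.

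The only subtlety — and the place I would spend the most care writing up — is the claim that $m_i$ and $v_i$ truly do not depend on the split $a_1$. This is essentially the decomposition property that under Random-Chunk each chunk behaves as an independent $M/G/1/PS$ queue with parameters determined entirely by its own $k_i$, so splitting the arrival stream via Poisson thinning onto the $a_i/k_i$ chunks of size $k_i$ leaves the per-chunk dynamics invariant under changes of $a_1$ (as long as $a_1$ is a multiple of $k_1$ and $a_2$ is a multiple of $k_2$). Given that, the rest is a one-line concavity argument. If one additionally wants a closed form for $v_i$, the sojourn-time variance of $M/M/1/PS$ (or the known expressions for $M/G/1/PS$) can be cited, but this is not needed for the minimizer argument since only the sign of the leading coefficient matters.
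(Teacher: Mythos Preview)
Your proof is correct and follows essentially the same approach as the paper: both condition on the chunk type, use that the conditional response-time distribution depends only on $k_i$ and not on $a_1$, and conclude that $\Var{T}$ is a concave quadratic in $a_1$ (equivalently $p_1$), hence minimized at an endpoint. Your packaging via the law of total variance is slightly cleaner, making the leading coefficient $-(m_1-m_2)^2$ immediately visible, whereas the paper expands $\E{T^2}-\E{T}^2$ directly.
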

\begin{proof}
By conditioning we have that the variance of the response time is given by
\begin{align*}
\Var{T} =& \E{T-\E{T}}^2 = \E{T^2}-\E{T}^2 \\
=& \frac{a_1}n\E{T^2 \mid E_1} +\frac{a_2}n \E{T^2 \mid E_2}\\
&- \left(\frac{a_1}n\E{T \mid E_1} +\frac{a_2}n \E{T \mid E_2}\right)^2.
\end{align*}

Again, the substitution of $a_2 = n-a_1$ and subsequent algebraic manipulation leads to
\begin{align*}
\Var{T} =& -a_1^2\left(\frac{\E{T \mid E_1} + \E{T \mid E_2}}{n}\right)^2 \\
+& \frac{a_1}{n}\biggl(\E{T^2 \mid E_1} - \E{T^2 \mid E_2}\\
&- 2 \E{T\mid E_1}\E{T\mid E_2} + 2\E{T \mid E_2}^2 \biggr)\\
&+\E{T^2\mid E_2} - \E{T \mid E_2}^2.
\end{align*}

Note that this expression is of the form $\Var{T} = -a_1^2x+a_1y+z$, and thus the variance of the response time is quadratic in $a_1$. Furthermore, it is clear from the expression that $x$ is positive, since mean response times must be positive.  The variance of the response time, then, is a concave quadratic in $a_1$.  Since the number of cores, $a_1$, assigned to size-$k_1$ chunks takes an integer value between 0 and $n$, this means that the variance of the response time is minimized on a boundary, when $a_1=0$ or $a_1=n$.
\end{proof}

\subsubsection*{Why Multiple Chunk Types Do Not Help}
 The issue is that the benefit of having larger chunks available is usually outweighed by a reduction in the stability region of the system.  We have therefore chosen to omit the bulk of our analysis of these policies.

\subsection{The Nelson-Philips Approximation}\label{app-np}
Here we state the approximation of mean response time under JSQ as given in \cite{Nelson1993PerfromEval}.

Let $M/M/c/JSQ$ denote a $c$ core system with total arrival rate $\Lambda$ and service rate $\mu=1/\mathbb{E}[X]$ at each core.  Let 
$$\rho:=\frac{\Lambda}{c\mu}.$$

We then define the following terms:
\begin{align*}
a(\rho)&=1 - \frac{c\rho}{c+4}\\
b(\rho)&=\frac{c\rho}{(c+4)(c-1)}\\
\xi(\rho)&= \frac{\rho(1-c\rho^{c-1}+(c-1)\rho^c)}{(1-\rho)(1-\rho^c)}\\
q(\rho) &= a(\rho) + b(\rho)\xi(\rho)\\
S(\rho)&=\frac{c(1-\rho)}{1-\rho^c}\{\rho^c + q(1-\rho^c)\}\\
\alpha_1 = &0.0455, \alpha_2 = 0.7678, \gamma_1 = 0.0216, \gamma_2= 0.0045\\
r_c&=\gamma_1 \log_2(c) + \gamma_2\\
i_c&=-1/\log_2\{\alpha_1\log_2(c) + \alpha_2\}\\
R(\rho)&= \frac{1}{1-4r_c\rho^{i_c}(1-\rho^{i_c})}\\
A(\rho)&=\left[\sum_{n=0}^{c-1} (c\rho)^n/n!\right] + (c\rho)^c/(c!(1-\rho))\\
P_c(\rho)&=(c\rho)^c/(c!(1-\rho)A(\rho))\\
W_{M/M/c}(\rho)&=\frac{1}{\mu}\frac{P_c(\rho)}{c(1-\rho)}
\end{align*}
The term $W_{M/M/c}(\rho)$ is the mean waiting time in an $M/M/c$ system.  The term $S(\rho)$ is an approximation of the mean length of the shortest queue in the $M/M/c/JSQ$ system.  $R(\rho)$ is an experimentally derived error correction term.  The mean response time in the $M/M/c/JSQ$ system given in \cite{Nelson1993PerfromEval} is:
$$\mathbb{E}[T]^{JSQ} \approx W_{M/M/c}(\rho)S(\rho)R(\rho) + \frac{1}{\mu}.$$

\subsection{Proof of Lemma \ref{lemma:v}}\label{pf:lemma:v}

LEMMA 5.2. \emph{Given any GREEDY policy, $P$, for any $(x_1, x_2) \in \mathcal{S}$, 
$$V^P(x_1+1, x_2) > V^P(x_1, x_2+1).$$
}
\begin{proof}
We begin by defining $V_n^P$ as follows:
$$V_{n+1}^P(x_1, x_2) = A_n^P(x_1, x_2)+I_n^P(x_1, x_2)$$ 
where 
\begin{align*}
A_{n}^P(x_1, x_2) =& c(x_1,x_2)+\Lambda_1\left(V_n^{P}(x_1+1,x_2)-V_n^P(x_1, x_2)\right)\\
&+\Lambda_2 \left(V_n^{P}(x_1, x_2+1)-V_n^P(x_1, x_2)\right)
\end{align*}
and 
\begin{align*}
I_{n}^P=&\gamma_1^P(x_1, x_2)\left(V_n^P((x_1-1)^+, x_2)-V_n^P(x_1, x_2)\right) \\
&+ \gamma_2^P(x_1, x_2)\left(V_n^P(x_1, (x_2-1)^+)-V_n^P(x_1, x_2)\right)
\end{align*}
and $V_0^P(x_1, x_2)=x_1 + x_2 + \frac{x_1}{x_1 + x_2 +1}$ for all $(x_1, x_2) \in \mathcal{S}$.

Recall that $\gamma_i^P(x_1, x_2)$ denotes the departure rate of class $i$ jobs from state $(x_1, x_2)$ under policy $P$, $\Lambda_i$ denotes the arrival rate of class $i$ jobs, and $c(x_1,x_2)=x_1+x_2$.  From \cite{puterman2014markov} we know that 
$$\lim_{n\rightarrow \infty} V_n^P(x_1,x_2) -V_n^P(y_1,y_2) = V^P(x_1, x_2) - V^P(y_1, y_2).$$

Thus, if we can prove that our claim holds for $V_n^P(x_1,x_2)$ for all $n \geq 0$, then it must hold for $V^P(x_1, x_2)$ as well (see, for example, \cite{Koole}).  We will now prove that all three of the following properties of $V_n^P$ hold for all $n\geq 0$ by induction:

\begin{enumerate}
\item $V_n^P(x_1+1, x_2) > V_n^P(x_1, x_2)$ 
\item $V_n^P(x_1, x_2+1) > V_n^P(x_1, x_2)$
\item $V_n^P(x_1+1, x_2) > V_n^P(x_1, x_2+1)$
\end{enumerate}

Note that the first two properties will be necessary for our proof of the third property, and the lemma follows directly from the third property.

We can easily verify that all three properties hold when $n=0$ due to our choice of $V_0^P$.  We now wish to show that if these properties hold for $V_n^P$, they must hold for $V_{n+1}^P$.

To prove property 1, we wish to show that
\begin{align*}
&V^P_{n+1}(x_1+1, x_2) - V^P_{n+1}(x_1, x_2) \\
&= A^P_n(x_1+1, x_2) - A^P_n(x_1, x_2)\\
&\;+I^P_n(x_1+1, x_2) - I^P_n(x_1, x_2) \\
&> 0.
\end{align*}

We can easily see that $A^P_n(x_1+1, x_2) - A^P_n(x_1, x_2) > 0$, since the cost function $c(x_1, x_2)$ is increasing in $x_1$ and $V_n^P(x_1, x_2)$ is increasing in $x_1$ by the property 1 of the inductive hypothesis.  To see that $I^P_n(x_1+1, x_2) - I^P_n(x_1, x_2) > 0$, we can expand the terms as follows:
\begin{align*}
    &I^P_n(x_1+1, x_2) - I^P_n(x_1, x_2) \\
		&=\gamma^P_1(x_1, x_2)\left(V^P_n(x_1, x_2)-V^P_n((x_1-1)^+, x_2)\right) \\
		&\;+\gamma^P_2(x_1, x_2)\left(V^P_n(x_1, x_2)-V^P_n(x_1, (x_2-1)^+)\right) \\
		&\;+\gamma^P_2(x_1+1, x_2)\left(V^P_n(x_1+1, (x_2-1)^+)-V^P_n(x_1, x_2)\right) \\
		&\;+\left(1-\gamma^P_1(x_1+1, x_2)-\gamma^P_2(x_1+1, x_2)\right)\left(V^P_n(x_1+1, x_2)-V^P_n(x_1, x_2)\right).
\end{align*}
Each line here is positive by the inductive hypothesis and the fact that 
$$\left(1-\gamma^P_1(x_1+1, x_2)-\gamma^P_2(x_1+1, x_2)\right) \geq 0$$
since the system was uniformized to one.  Thus property 1 holds.  The proof of property 2 follows a very similar argument.

To show property 3 we wish to show that 
\begin{align*}
&V^P_{n+1}(x_1+1, x_2) - V^P_{n+1}(x_1, x_2+1) \\
&= A^P_n(x_1+1, x_2) - A^P_n(x_1, x_2+1)\\
&\;+I^P_n(x_1+1, x_2) - I^P_n(x_1, x_2+1) \\
&> 0.
\end{align*}
We can easily see that $A^P_n(x_1+1, x_2) - A^P_n(x_1, x_2+1) > 0$ by the inductive hypothesis.  To see that $I^P_n(x_1+1, x_2) - I^P_n(x_1, x_2+1) > 0$, we can expand the terms as follows:
\begin{align*}
&I^P_n(x_1+1, x_2) - I^P_n(x_1, x_2+1) \\
&= \gamma^P_1(x_1, x_2+1)\left(V^P_n(x_1, x_2)-V^P_n((x_1-1)^+, x_2+1)\right) \\
&\;+\gamma^P_2(x_1+1, x_2)\left(V^P_n(x_1+1, (x_2-1)^+)-V^P_n(x_1, x_2)\right) \\
&\;+\left(\gamma^P_1(x_1, x_2+1)+\gamma^P_2(x_1, x_2+1)-\gamma^P_1(x_1+1, x_2)-\gamma^P_2(x_1+1, x_2)\right) \\
&\;\;\;\;\qquad\qquad\times\left(V^P_n(x_1+1, x_2)-V^P_n(x_1, x_2)\right) \\
&\;+\left(1-\gamma^P_1(x_1, x_2+1)-\gamma^P_2(x_1, x_2+1)\right) \\
&\;\;\;\;\qquad\qquad\times\left(V^P_n(x_1+1, x_2)-V^P_n(x_1, x_2+1)\right). \\
\end{align*}
We know, by assumption, that $s_1(k) < s_2(k)$ for $k > 1$, and thus the coefficient $\gamma^P_1(x_1, x_2+1)+\gamma^P_2(x_1, x_2+1)-\gamma^P_1(x_1+1, x_2)-\gamma^P_2(x_1+1, x_2)$ is non-negative.  Therefore, all terms in this sum are positive by the inductive hypothesis, and property 3 holds.

All three properties therefore hold by induction, and $V^P(x_1+1, x_2) > V^P(x_1, x_2+1)$ as desired.
\end{proof}
\section*{Acknowledgements}
The first and third authors were supported by: NSF-CMMI-1538204, NSF-XPS-1629444, by a Google Faculty Research Award, and by a Facebook Faculty Research Award.
Parts of the second author's work were carried out while visiting the Carnegie Mellon University and while affiliated with Leiden University. In addition, the second author's research was in part funded by NWO Gravitation project NETWORKS, grant number 024.002.003.
\bibliographystyle{ACM-Reference-Format}
\bibliography{bibliography}

\end{document}